\theoremstyle{plain}
\newtheorem{thm}{Theorem}[section]
\newtheorem{prop}[thm]{Proposition}
\theoremstyle{definition}
\newtheorem{defn}[thm]{Definition}
\theoremstyle{remark}
\newtheorem{rem}[thm]{Remark}
\begin{document}

\preprint{APS/123-QED}

\title{
Dark Energy from 
topology
}

\author{J. Lorca Espiro$^1$}
\email{javier.lorca@ufrontera.cl}
\affiliation{$1$ - Departamento de Ciencias F\'\i sicas, Facultad de Ingenier\'\i a, Ciencias y Administraci\'on, Universidad de La Frontera, Avda. Francisco Salazar 01145, Casilla 54-D Temuco, Chile.}

\author{M. Le Delliou$^{2,3,4}$}
\email{{\it Corresponding author:}\\(delliou@lzu.edu.cn,) morgan.ledelliou.ift@gmail.com}
\affiliation{
$2$ - Institute of Theoretical Physics, School of Physical Science and Technology, Lanzhou University,
No.222, South Tianshui Road, Lanzhou, Gansu 730000, P R China
}%
\affiliation{
$3$ - Instituto de Astrof\'isica e Ci\^encias do Espa\c co, Universidade de Lisboa,
Faculdade de Ci\^encias, Ed.~C8, Campo Grande, 1769-016 Lisboa, Portugal
}%
\affiliation{$4$ - Instituto de F\'isica Te\'orica, Universidade Estadual de S\~ao Paulo (IFT-UNESP), Rua Dr. Bento Teobaldo Ferraz 271, Bloco 2 - Barra Funda,  CEP 01140-070, S\~ao Paulo, SP, Brazil}%



\date{\today}

\begin{abstract}
The concordance model of cosmology suffers from the major theoretical problems surrounding the observed value and recent emergence of a cosmological constant. In this paper we present a novel approach, which explains more naturally its value than that based on quantum vacuum energy, in the form of topological invariants characteristic classes, included as Lagrange multipliers in the action. The approach draws from topological as well as dynamical system consideration, generating as a byproduct an effective cosmological constant. General Relativity is recovered by canceling the torsion in a region containing the observable Universe, which boundary constraints the invariants, thus yielding the effective cosmological constant's form. As that form's denominator contains the total volume of the average black hole, calculated from a geometrical mean on the estimated black hole mass distribution and directly associated to the ratio of the total volume boundary of the space-time manifold and the dominant term in its Euler characteristic. The constant's small estimated value compared to the Planck scale is therefore natural and our evaluation fits remarkably well with the observed value.
\end{abstract}

\keywords{general relativity \sep differential geometry \sep topological invariants \sep torsion two-form\sep space-time topology\sep cosmological constant \sep cosmology}
\maketitle

\tableofcontents

\section{\label{sec:intro}Introduction}
The need for a cosmological constant, or an equivalent effect, was revived when stronger than expected fainting in type Ia supernovae, used as standard candles, was detected \cite{Perlmutter:1997zf,Riess:1998cb,Perlmutter:1998np}. This was interpreted as cosmic acceleration and confirmed later from cosmic microwave background radiation, clusters and baryon acoustic oscillation measurements \cite{Hinshaw:2012aka,Bennett:2012zja,Aghanim:2012vda,Ade:2013sjv}.

The first common interpretation of such constant portraits it as quantum vacuum energy. However this raises more problems, such as that from naive evaluations of the full quantum vacuum energy, i.e. of order the Planck scale \cite{Martin}, compared with the $\Lambda$ value extracted from observed acceleration, offers one of the largest discrepancy in physics, coined the fine tuning problem \citep{Weinberg:1988cp,Weinberg:2008zzc}.

The fine tuning problem proceeds from the dominance at early time of the Planck scale on the quantum vacuum scale. It should have therefore given initial values for all running constants, including the cosmological constant. Since $\Lambda$ is constant, its value is naively expected to be set from the energy conditions of the Planck era, instead of the hundred order of magnitude smaller observed value from cosmic dynamics. Moreover, as the cosmological constant is non-varying, it also entails the coincidence problem \citep{Amendola:1999er,TocchiniValentini:2001ty,Zimdahl:2001ar,Zimdahl:2002zb}. 

That extra problem stems from the apparently unnatural proximity between the present epoch and the moment of emergence of the cosmological constant as dominant in the cosmic energy density balance \cite{Frieman:2008sn,Li:2012dt}. Such surprising coincidence between values of $\Lambda$, expected to come from the Universe evolution's initial conditions, and today's Universe matter energy density content  would require at least an explanation.

The standard understanding of the cosmological constant is review, e.g. in \cite{Martin}.

General relativity's (GR) curved spacetime can be generalised, to include torsion on the manifold, in Einstein-Cartan theories \cite{Dona,Giulini,nakahara,hatcher}. Their two limiting cases for equivalent degrees of freedom allow to recover 
\begin{enumerate}
    \item the torsion-less GR limit and
    \item the curvature-less Teleparallel Equivalent to GR case \cite{Deandrade,Baez,Arcos}.
\end{enumerate} 
This work focusses on the GR limit of Einstein-Cartan theories.

The Gravitational action contains further freedom in the form of boundary terms \cite{Baekler,Dyer} and manifold topology \cite{donaldson,Sengupta,Kaul}. We restrain here the latter with an action, built on the Einstein-Hilbert (EH) and Vielbein-Einstein-Palatini (VEP) equivalence \cite{Dadhich}, that adds to GR's well known features, the spacetime topology-consistent Chern-type topological invariants, as well as the cobordism.

Such approach remarkably gives rise to a topologically related term which, under certain conditions, acts as an \textit{effective cosmological constant}.

This very general action allows this paper to argue that the cosmological constant can have a topological origin rather than being generated at the universe's formation from quantum vacuum considerations. The restriction of such action, naturally featuring torsion and an 
\textit{effective cosmological constant} related to topology, to the usual GR condition can be effected by dynamical systems stability considerations. Since the resulting \textit{effective cosmological constant} connects to the topology of the Universe via a value we argue to be proportional to the finite inverse average black hole volume, and no longer to the large Planck density, it renders moot the fine tuning problem. A preliminary discussion of these results can be found in \cite{LeDelliou:2019mus}. This paper presents a detailed technical account of this work.

Topology as a source of cosmology has been considered in the works of \cite{Alexander1} and \cite{Alexander2}.In those papers, the topological terms are coupled trivially over closed manifolds, contrary to our approach. This allows for the classical field equations to remain unchanged, only affecting the quantum equations. This is obtained using the Ashtekar variables formalism, while we keep our work at the classical level and use Einstein-Cartan formalism.

We begin by introducing the topological invariants articulation with the Vielbein-Einstein-Palatini action in Sec.~\ref{sec:model}. We introduce the supplementary dynamics induced by these invariants in the action, which result, in Sec. \ref{sec:cosmtop}, in the effective cosmological constant and, as a byproduct and on a practical ground, in the recovery of GR. Section \ref{sec:TCCeval} presents how this topological constant value can be estimated for our Universe. Finally, we discuss and conclude our findings in Sec.~\ref{sec:Conclusion}.

\section{\label{sec:model}The model}

In this section we briefly outline the mechanism in which the torsion-less classical gravitational action inherits torsional degrees of freedom when topological terms are added. By topological terms we mean the so called characteristic classes well defined for any $4n$-manifold, in this $n=1$. This procedure will suggest the natural upgrade of the cosmological constant term in the \textit{Einsteinian} tradition of gravitational theories to a functional with degrees of freedom related to topological information of the space-time manifold. Appendix \ref{sec:cartan} provides a brief account of the formalism and overall notation.

In \ref{subsec:comact} is demonstrated that, in the case in which the space-time manifold $\mathcal{M}$ has a finite volume $\text{Vol} \left( \mathcal{M} \right)$ and a boundary $\partial \mathcal{M}$, the cosmological constant $\Lambda$ is functionally related to the Dirichlet to Neumann operator $\Upsilon \left[ \chi \right]$ (DN operator). Recall that this operator maps the Dirichlet boundary conditions to those of Neumann for any well posed boundary value problem\footnote{Informally, a physical observable $\mathcal{O}$, associated to differential form $\hat{\mathcal{O}} \in \Omega \left( \mathcal{M} \right)$, can be uniquely defined as a Dirichlet problem, i.e. by knowing its value over space-time 
boundary $\hat{\mathcal{O}} |_{ \partial \mathcal{M}}$, or as a Neumann problem, i.e. by knowing the value of its normal derivatives $L_n \hat{\mathcal{O}} |_{ \partial \mathcal{M}}$ over the space-time boundary, with $n$ the outward unit vector over the boundary $\partial \mathcal{M}$. The Dirichlet to Neumann operator allows to move freely between the two types of problems.}. The exact expression is simply:
\begin{align*}
\Lambda = \kappa \Upsilon \left[ \chi \right] = \frac{1}{ \text{Vol} \left( \mathcal{M} \right)} \int\limits_{\partial \mathcal{M}} \chi \;\;,
\end{align*}
which is valid for $\chi$, a $3$-form satisfying the well posed boundary value problem appearing in (\ref{DNop}) and Proposition \ref{DNoperator}, with $\kappa = 8 \pi G c^{-4}$ being the gravitation coupling constant. This result strikes as being of a mostly topological nature. Moreover, if the kernel of the DN operator $\Upsilon \left[ \chi \right]$ is known, lower bounds for the so called Betti numbers (see Eq.~\ref{betti}) can be obtained, which we can intuitively understand as counting the $n$-dimensional punctures of a manifold. Hence, the topology of the manifold $\mathcal{M}$ is restricted.

This observation enlighten us is two ways: 

\begin{enumerate}[i)]
\item The cosmological constant, for the conditions presented, is related to topological rather than vacuum energy considerations. Furthermore, it is inversely proportional to $\text{Vol} \left( \mathcal{M} \right)$ which naturally favors small values for $\Lambda$, opening the possibility to alleviate or even solve the fine tuning problem.

\item  The cosmological constant can be understood as some type of functional of certain fields rather than as a fixed parameter of the theory. 
\end{enumerate}

Hence, inspired by the above observations we upgrade the cosmological constant $\Lambda$ to a functional $\tilde{\lambda}$. We expect this new quantity to also be characterized by topological degrees of freedom such that this new functional defines a well posed variational problem when the field equations are tackled.

The model is completely characterized in Appendix \ref{sec:The model} but here are discussed its most important features. Let the index set $J = \left\{ N, P , E \right\} $ with $j \in J$ be the collection of subscripts related to the Nieh-Yan, Pontryagin and Euler characteristic classes, respectively. We focus on the family of actions having the following three distinctive sectors: 
\begin{align}\label{Bodyaction}
S & := S_{G} \left[e^c, \omega^a_{\;\;b} , \tilde{\lambda} \right] + S_{T} \left[e^c, \omega^a_{\;\;b}, \varphi_j \right] +  S_M \left[ \Psi \right] \; ,
\end{align}
where the first term is the pure gravitational action (see Eq.~\ref{EHactionmod}), $e^c$ are the tetrad, $\omega^a_{\;\;b}$ is the total connection and $\tilde{\lambda}$ is the newly upgraded \textit{cosmological functional}. The second term is the \textit{topological action}, named after the fact that it explicitly includes the three characteristic classes $C_j$ $(j \in J)$ for the characteristic classes: 
\begin{align}
\label{Pontryagin1} \small C_{P}  & = \frac{1}{8 \pi^2} R^a_{\;\;b} \wedge R^b_{\;\;a} \;\;,\\
\label{Euler1} \small C_{E} & = \frac{1}{8 \pi^2} R^{ab} \wedge R^{\left( * \right)}_{ba} \;\;, \\
\label{NiehYan1}\small C_N & = T^a \wedge T_a - R_{ab} \wedge \Sigma^{ab} \;\;,
\end{align}
(See also \ref{TopManifold}) which are consistent with the space-time manifold $\mathcal{M}$. Each characteristic class is coupled to its corresponding $\varphi_j$ zero-forms (see Eq.~ \ref{Stop}) (functions) having dimensions of $ \left[ \texttt{length}^{2} \right]$ for $j=E,P$ and being dimension-less for $j=N$.

The last term in (\ref{Bodyaction}) stands for an explicit matter action (see Eq.~\ref{SMat}) of the Dirac type, where $\Psi:=\left\{ \bar{\psi}, \psi \right\}$ is a collected term for the spinor fields. As usual, a matrix representation $\gamma^a$ of the Clifford algebra $\left\{ \gamma^a , \gamma^b \right\} = 2 \eta^{ab} $ ( with $\eta^{ab}$ the Minkowskian metric) is considered, such that $\bar{\psi }:= \psi ^{\dagger} \gamma^{0}$ and $\psi^{\dagger}$ is the conjugate transpose of $\psi$. By specifying the cosmological functional to behave as the Yukawa-type of interaction
\begin{align}\label{Yukawa}
\tilde{\lambda} := \bar{\psi} \lambda \left[ \varphi_j \right] \psi \;\; ,    
\end{align}
we are in position to obtain the variation of Eq.~(\ref{Bodyaction}). Notice that $\lambda$ now concentrates all the information and becomes the only source of interaction between the spinor fields and the coupling zero-forms. The field equations (for $\delta e^a$, $\delta \omega_{ab}$, $\delta \varphi_j$, $\delta \bar{\psi}$ and $\delta \psi$ , respectively)\footnote{Originally the field equations suppose the presence of a fermion potential $V \left( \Psi \right)$. However, since this potential is not involved in the final calculation, we prefer to omit it out of clarity.} read:
\begin{align}
\label{fe1} & 0 = \left[ R^{\left( *\right)}_{ab} - \frac{2 \lambda}{3} \left| \psi \right|^2_{\psi} \hat{\Sigma}^{(*)}_{ab} \right] \wedge e^{b} + i d \varphi_{N} \wedge T_{a} + \Xi_a  \;\; , \\
\nonumber & 0 = d_{\omega} \hat{\Sigma}^{(*)}_{ab} + i 2 d \hat{\varphi}_P \wedge R_{ab} - i 2 d \hat{\varphi}_E \wedge R^{\left( * \right)}_{ab}  + \\ 
\label{fe2} & \quad \quad \quad \quad \quad - i d \varphi_{N} \wedge \hat{\Sigma}^{\flat}_{ab} - \tau_{ab} \;, \\ 
\label{fe3} & 0  = - i \frac{\delta \lambda}{ \delta \varphi_j } \left| \psi \right|^2_{\psi} d \mu - C_j \;\;\;\;\;\; , \;\;\;\;\;\; j = \left\{ E, P, N \right\} \;\; , \\
\label{fe4} & \; 0 = \gamma^a D_{\omega} \psi - \frac{\lambda}{4} \psi \, e^a \;\;\;\;\; , \;\;\;\; 0 = D_{\omega} \bar{\psi} \gamma^a + \frac{\lambda}{4} \bar{\psi} \, e^a \;\;\; ,
\end{align}
where we have defined $\left| \psi \right|^2_{\psi} := \bar{\psi} \psi$ and the normalized quantity $\hat{\varphi}_j := \frac{\varphi_j}{\left( 4 \pi \right)^2}$ for brevity, $d_{\omega}$ stands for the exterior covariant derivative, $D_{\omega}$ stands for the gauge covariant derivative (see definition in Eq.~\ref{gaugecovder}) and we have defined the quantities:
\begin{align}
\label{fevarious} & \Xi_{a} := \frac{ \delta \mathfrak{L}_{M}}{\delta e^a} = 2 \mathfrak{Re} \left\{ \bar{\psi} \gamma^{b} D_{\omega} \psi \right\}\wedge \hat{\Sigma}^{(*)}_{ab} \;\;
; \;\; i := \sqrt{-1} \;\;\; ; \;\;\;  \\
\nonumber & \tau_{ab} := \frac{\delta \mathfrak{L}_M}{\delta \omega^{ab} } = \frac{1}{4} \bar{\psi} \sigma_{abc} \psi \star e^c \; ; \; \sigma^{a_1 \cdots a_k}:= \gamma^{\left[ a_1 \right.} \cdots \gamma^{\left. a_k \right]} \, .
\end{align}

Notice the non-GR terms in the field Eqs.~(\ref{fe1}) and (\ref{fe2}), i.e. the ones that are multiplied by the coupling zero-forms and/or its exterior derivatives. These are the terms that ultimately differentiate the current theory with that of the strict GR.

Being so, we can interpret the cosmological functional $\tilde{\lambda}$ as the source of the topology of the space-time manifold $\mathcal{M}$. As discussed in Appendix \ref{TopManifold}, a direct consequence of this modelling is that it ensures the space-time manifold $\mathcal{M}$ to be oriented, simply connected and smooth.

In this way, we show that the couplings zero forms $\varphi_j$ ($j \in J$) can actually be naturally interpreted as scalar matter fields. This becomes clear once the total $2$-form curvature $R_{ab}$ (see Eq.~\ref{Riemanntensorgral}) is obtained, its torsion related part will have terms directly dependant on $\varphi_j$ with a clear dynamical interpretation.

\section{\label{sec:cosmtop}Obtaining a cosmological constant from topology}

A thorough exposition of the detailed calculations is presented in Appendix \ref{sec:sol}. We outline here the general procedure used to solve the field equations. We stress that, given the strong non-linearities of the field equations, we opted for an approach mixing formal and heuristic methodologies in which some physical restrictions have been made in order to obtain closed analytical solutions.

\subsection{\label{subsec:nonlinear} Managing the non-linearities of the canonical field equations}

Basically, due to the form of the topological action, we are allowing for the presence of torsion in the field equations, so the VEP first order formalism becomes ideal to treat this feature. We recall that the total connection $1$-form $\omega^{a}_{\;\;b}$ can always be written as:
\begin{align*}
\omega^{a}_{\;\;b} & = \bar \omega^{a}_{\;\;b} + K^{a}_{\;\;b}  & \in \Omega^1 \left( \mathcal{M} \right) \; ,
\end{align*}
where $\bar \omega^{a}_{\;\;b}$ is the torsion-less Levi-Civita connection, while $ K^{a}_{\;\;b}$ is the torsion related contortion $1$-form. The latter becomes the only responsible for the torsion $2$-form as: 
\begin{align*}
T^a & = K^{a}_{\;\;b} \wedge e^b & \in \Omega^2 \left( \mathcal{M} \right) \; .
\end{align*}

By means of the same connection decomposition, the total curvature $2$-form can then be written as (see Appendix \ref{subsec:VEPformalism}, Eq.~(\ref{torcurv}) for details):
\begin{align}
R^a_{\;\;b} & =  \bar{R}^a_{\;\;b} +  \Theta^a_{\;\;b}  & \in \Omega^2 \left( \mathcal{M} \right) \;\;,\label{RiemanntensorgralB}
\end{align}
where $\bar{R}^a_{\;\;b}$ is the torsion-less part of the curvature, formally equivalent to that of a GR curvature $2$-form, while the term $\Theta^a_{\;\;b}$ concentrates all the contributions from the torsion related quantities by means of the contortion $K^{a}_{\;\;b}$ as:
\begin{align}\label{torcurvB}
\Theta^a_{\;\;b} := d_{\omega} K^a_{\;\;b} - K^a_{\;\;c} \wedge K^{c}_{\;\;b} \quad \in \Omega^2 \left( \mathcal{M} \right) \;\;.
\end{align}

Focusing back on the first two field Eqs.~(\ref{fe1}) and (\ref{fe2}), it can be seen that a GR-like behavior is obtained back if the coupling zero-forms $\varphi_j$ ($j \in J$) become slowly varying functions within the torsion-less regions of space-time. The possibility of this behavior for the coupling zero-forms is related to the existence of stability regions (see Appendix \ref{sec:borel} and the references therein) for a turn around in the dynamical system defined by the field Eqs.~(\ref{fe1} - \ref{fe4}). We can therefore expect to relate the vanishing of the torsion with the derivatives of such a slow varying function. This can be equivalently thought of as a contortion which is at least proportional to the derivatives of the coupling zero-forms $\varphi_j$ ($j \in J$). Consequently, this observation plus the natural decomposition of the curvature $2$-form $R_{ab}$ (\ref{RiemanntensorgralB}), suggests a contortion $1$-form:
\begin{align}\label{contortionbody}
K_{ab} = 4i \star \left( d \varphi_N \wedge \hat{\Sigma}^{\flat}_{ab} \right) \;\; \in \;\; \Omega^1 \left( \mathcal{M} \right) \;\;\;.
\end{align}
This contortion cancels the undesired non-linear terms appearing in the field Eqs.~(\ref{fe1}) and (\ref{fe2}) via the torsion related part of the curvature $\Theta_{ab}$ (See Eq.~(\ref{torcurvB})).

It is through $\Theta_{ab}$ also that a term proportional to the inner product $\left| d \varphi_N \right|^2$ appears in the field equations. When interpreted as a kinetic term, the coupling zero-form $\varphi_N$ inherits a matter field status. It is then just a matter of rearranging the action  (\ref{Bodyaction}), assuming the correctness of the contortion (\ref{contortionbody}), so that the coupling zero-forms can be gathered in a canonical scalar matter field action, justifying our claim on the interpretation of them as matter fields. Moreover, after some algebra and with the help of Eqs.~(\ref{fe4}) to (\ref{fe1}), we obtain the following expression for the torsion-less curvature $2$-form:
\begin{align*}
& \bar{R}_{ab} = \frac{ 2 \left( \alpha_{\psi} \lambda \left[ \varphi_j \right] + \left| d \varphi_N \right|^2 \right)}{3} \hat{\Sigma}^{\flat}_{ab} \;\; ; \;\; \alpha_{\psi} := \frac{\left| \psi \right|^2_{\psi}}{4} \in \mathbb{R} \;\; .
\end{align*}
which is formally equivalent to the analogous field equation obtained in a GR-like theory. The last expression has the right behavior in the limiting case where torsion is turned off, as expected.

Tackling the second field equation, we see that (\ref{fe1}) only contains one of the coupling zero form, i.e. $\varphi_N$, while in comparison (\ref{fe2}) had all three $\varphi_j$ ($j \in J$). Thus, in order for the expression $\Theta_{ab}$, depending only on the coupling zero-form $\varphi_N$, to be sufficient to correct the non-GR terms in (\ref{fe2}) we should expect some dependency between the coupling zero-forms. Moreover, when inserting the calculated total curvature $R_{ab}$ and its Lie dual $R^{(*)}_{ab}$ (See expression \ref{curvaturetotal}) into the second field equation (\ref{fe2}), we obtain:
\begin{align*}
& \tau_{ab} = d_{\omega} \hat{\Sigma}^{(*)}_{ab} - \frac{4i}{3} \left[ \alpha_{\psi} \lambda d \hat{\varphi}_E + \left( i 2 \Delta \varphi_N \right) d \hat{\varphi}_P \right] \wedge \hat{\Sigma}^{(*)}_{ab} + \\ 
& + \frac{4 i}{3} \left[ \alpha_{\psi} \lambda d \hat{\varphi}_P - \left( i 2 \Delta \varphi_N \right) d \hat{\varphi}_E - \frac{3}{4} d \varphi_{N} \right] \wedge \hat{\Sigma}^{\flat}_{ab} + \\
& + 2 d \varphi_N  \wedge \left[ \left( d \hat{\varphi}_E \wedge K_{ab} \right) + \left( d \hat{\varphi}_P \wedge K_{ab} \right)^{(*)} \right] \;\; ,
\end{align*}
for which we naturally define the quadrature expressions:
\begin{align*}
d \hat{\varphi}_E & = \frac{3}{4} \frac{ i 2 \Delta \varphi_N }{ \left( \alpha_{\psi} \lambda \right)^2 + \left( 2 \Delta \varphi_N \right)^2} d \varphi_N \;\; , \\
d \hat{\varphi}_P & = \frac{3}{4} \frac{ \alpha_{\psi} \lambda }{ \left( \alpha_{\psi} \lambda \right)^2 + \left( 2 \Delta \varphi_N \right)^2 } d \varphi_N \;\; .
\end{align*}
We are effectively writing $\varphi_E$ and $\varphi_P$ in terms of $\varphi_N$. These relations restore the original Eq.~(\ref{fe2}) into the usual generalized spin equation for the analogous GR-like theories in the VEP formalism. The entire process then is summed up in the transformation $\lambda \left[ \varphi_j \right] \rightarrow \lambda \left[ \varphi_N \right]$.

\subsection{\label{subsec:topocalc} Tackling the topological field equations}

At this point, we are in position of calculating the characteristic classes either directly from (\ref{fe3}) or from the set (\ref{Pontryagin} - \ref{NiehYan}). We must ensure that these two approaches are compatible and, at the same time, we will shed some light on the characterization of the functional $\lambda\left[ \varphi_N \right]$.

For the Nieh-Yan characteristic class $C_N$, those two approaches are immediately compatible if the expression
\begin{align}\label{lambda1}
\lambda \left[ \varphi_N \right] = - \frac{\left| d \varphi_N \right|^2}{\left| \psi \right|^2_{\psi}} + \frac{4 \Lambda}{\left| \psi \right|^2_{\psi}}
\end{align}
is satisfied, where the second term is consistent with our assumption that the functional is constant when torsion is turned off. Hence, we can similarly calculate directly the Euler characteristic class $C_E$, also completely consistent but not necessarily illustrative for our purposes, as well as the Pontryagin characteristic class $C_P$, which needs the following restriction for compatibility:
\begin{align}\label{Dalembertian}
3 \left( \alpha_{\psi} \lambda \right)^2 = \left( 2 \Delta \varphi_N \right)^2 \; \Rightarrow \; \Delta \varphi_N = \pm \frac{\sqrt{3}}{2} \alpha_{\psi} \lambda \;\;.
\end{align}
When combined with Eq.~(\ref{lambda1}) characteristic class, Eq.~(\ref{Dalembertian}) yields the second characterization for the functional $\lambda \left[ \varphi_N \right]$ as:
\begin{align}\label{lambda2}
\frac{\delta \ln \left| \lambda \right| }{\delta \varphi_N} = \frac{\sqrt{3}}{4} \;\; \Rightarrow \;\; \lambda = \frac{4 \Lambda}{\left| \psi \right|^2_{\psi}} \exp \left( \pm \frac{\sqrt{3}}{4} \varphi_N \right) \; ,  
\end{align}
where the amplitude of the exponential function has been normalised coherently with (\ref{lambda1}). If we postulate dynamical stability, then the kinetic term associated to the zero-form $\varphi_N$ should behave as a Lyapunov function and we obtain the following expression:
\begin{align*}
\left| \frac{d \varphi_N}{4}  \right|^2 & = \Lambda \left\{ 1 - \exp \left( - \frac{\sqrt{3} }{4} \left| \varphi_N  \right|  \right) \right\} \;\; ,
\end{align*}
which can be used to characterize the GR-like regions of the space-time manifold $\mathcal{M}$, via the condition $d \varphi_N = 0$, as equivalent to $\varphi_N = 0$. We have identified this region as the submanifold $\mathcal{N} \subseteq \mathcal{M}$.

The exact form for the coupling zero-form $\varphi_N$ must be a result of solving the boundary value problem stated in Proposition (\ref{boundaryvalue}), which is out of the scope of this paper.

\subsection{\label{subsec:estimation} The cosmological constant.}

Calculating now the topological numbers, we obtain:

\begin{enumerate}[i)]
\item the Nieh-Yan number $ n_N = 0 $, from Eq.~(\ref{NYdirect}). This might be considered as the model's partial answer to the absence of
global manifestation of a physical quantity that accounts for presence of torsion in the observation. In other words, the model gives no hint of torsion, neither locally, in the GR-like regions, nor globally, since $n_Y$ is exactly null.

\item the Pontryagin number $n_P = 0$, from Eq.~(\ref{CP12}). Physically, this prevents the space-time manifold $\mathcal{M}$ to have an orientation reversing diffeomorphism \cite{donaldson}. However this also prevents the CPT type of symmetry expected from the quantum substructure. At the same time, having a non-zero Pontryagin number means that $\mathcal{M}$ cannot be the boundary of an oriented compact higher dimensional manifold \cite{donaldson}.

\item the Euler number:
\begin{align}\label{nEbody}
 n_E = - \frac{16 \Lambda^2}{3 \left( 4 \pi \right)^2} \left\| u^2_{\hat{\varphi}_N} \right\|^2_{L^2} \; ; \; u_{\hat{\varphi}_N} = \exp \left( - \frac{\sqrt{3}}{4} \left| \varphi_N \right| \right) \; ,
\end{align}
from Eq.~(\ref{CE12}), which is finite by topology since by construction $n_E \in \mathbb{Z}$ (See Theorem \ref{topnumber}). Considering that $n_E$ can be written as (see Sec.~\ref{subsec:characteristicclasses}):
\begin{align}\label{nEbody1}
n_E = - 2 k^2_E b_3 \;\;\; ,
\end{align}
where we have defined
\begin{align}\label{kEB}
k^2_E := 1 - \frac{1 + b}{ b_3}  \lesssim 1 \quad ;
\end{align}
here $2b$ is the number of $2$-\textit{punctures} and $b_3$ is the number of $3$-\textit{punctures} in $\mathcal{M}$, that is the number of $2$- and $3$- dimensional holes in  $\mathcal{M}$. Since we have not observed $2$-dimensional topological defects but black holes, which induce surface holes in space slices, i.e. $3$-dimensional holes in space-time, are pervasive in observation, we claim $b_3$ should dominate for any sensible space-time manifold $\mathcal{M}$, and interpret it to be essentially the number of black holes in the Universe.
\end{enumerate}

We thus obtain from  Eq.~(\ref{nEbody}) the exact result:
\begin{align}\label{lambdaexactbody}
\Lambda^2 = \frac{ 3 \left( 4 \pi \right)^2 k^2_E b_3 }{8 \left\| u^2_{\hat{\varphi}_N} \right\|^2_{L^2}} \quad ,
\end{align}
for which we can give the estimate (See also the discussion in Sec. \ref{subsec:characteristicclasses}):
\begin{align}\label{uestimatebody} 
0 \leq \left\| u^2_{\hat{\varphi}_N} \right\|^2_{L^2} \simeq \frac{\text{Vol} \left( \partial \mathcal{N} \right)}{\mathcal{C}^2} < \infty
\end{align}
where $\mathcal{C}^2$ is the Cheeger or isoperimetric constant \cite{Benson} with dimensions of $\left[ \texttt{length} \right]$. It has been calculated in certain specific cases. Finally, Eq.~(\ref{uestimate}), combined with (\ref{lambdaexactbody}), (\ref{manifoldregions}) and (\ref{eq:Topolambda1}) we have the topological cosmological constant estimate:
\begin{align}\label{eq:Topolambdabody}
 \Lambda \approx  \frac{ 2 \pi \, k_E \, \mathcal{C} }{ \left\langle \frac{2}{3} \text{Vol} \left( \partial \mathcal{M} \right) \right\rangle^{\frac{1}{2}} }  \;\; .
\end{align}

\section{\label{sec:TCCeval} Topological cosmological constant evaluation}

The resulting expression (\ref{eq:Topolambdabody}) for the topological value of the cosmological constant can be then confronted with its observed value.
To evaluate it, we need to consider the three key components:
\begin{enumerate}
    \item From (\ref{nEbody1}), we have that $k_E=\sqrt{-\frac{n_E}{2b_3}}$, the ratio of the Euler number to what we have interpreted as the number of black holes (BHs) in the manifold ($b_3=\mathcal{N}_{BH}$). Since we argue that for any sensible space-time, $b_3$ should dominate the other terms, we take $k_E\approx 1$,
    \item $\mathcal{C}$, the isoperimetric or Cheeger constant,  measures the ratio between the $3$-surface of boundary and the $4$-volume of the bulk of the spacetime considered, which is generally unknown but has been evaluated in some cases \cite{ASENS, Hoffman} to be of order $\sim 10^1$,
    \item $\langle \text{Vol} \left( \partial \mathcal{M} \right) \rangle$, the average volume of the boundary of the space-time. 
\end{enumerate}
We shall evaluate it in what follows.This detailed evaluation is presented in Appendix \ref{sec:BHsurf}.

\subsection{Average boundary volume}

From the assumption on $b_3$, the evaluation of the space-time boundary follows from assuming
\begin{enumerate}
    \item the boundary is made of BHs,
    \item the hypersurface of the boundary is the sum of all their outer horizons. These are approximated by their Schwarzschild horizons, neglecting Kerr horizons deformations and merger history variations,
    \item the distribution of BH in the Universe is well approximated by the observed distribution (on our past lightcone),
    \item the resulting BH average volume can be approximated by the volume of the average BH. Given that the available distribution concerns BH masses, we used the volume of the average BH mass rather than an average volume which distribution we have no access at all.
\end{enumerate}
We evaluate the average BH volume by first getting the volume for a given mass, then using knowledge of the Universe's BH distribution to evaluate the average BH mass. The resulting volume of the average BH is then taken as BH average volume.

\subsubsection{Volume of a given BH}

The total volume of the boundary of a BH of given mass $M$ until its evaporation is evaluated considering it appears at creation with initial mass $M$ and evaporates through Hawking radiation \cite[e.g.][]{Wald:1984rg,Carroll:2004st}. The formation phase is neglected since its dynamical time is expected to be considerably much less than the formation time, and a simple isolated Schwarzschild BH is considered. The resulting volume read as 
\begin{align}
     \text{Vol}_{BH} \left( M \right)=&1.96\times 10^{87}\left(\frac{M}{M_\odot}\right)^5 m^3.\label{eq:BHBounVol}
\end{align}

\subsubsection{Evaluation of the average BH mass}

We are limited by our present knowledge of the BH mass distribution. We chose to evaluate it using Refs~\cite{Kovetz:2016kpi,Mutlu-Pakdil2016,Garcia-Bellido:2017fdg,Christian:2018mjv}. We used those references to evaluate averages and variances for the masses of Stellar made, Primordial, Intermediate mass and Super Massive BHs (SMBHs). Despite being aware of the limits on knowledge about the BH population, we resolved to treat available information as informed views and indications on it and present the information we were able to extract as signs of the potential of the model. From the theoretical figure of \cite{Garcia-Bellido:2017fdg} (Fig 7, indication of shape and proportions), we expect the BH distribution to present 4 peaks: Stellar BHs (SBHs), Primordial BHs (PBHs), Intermediate mass BHs (IMBHs) and Super Massive BHs (SMBHs).

Fig 10 of \cite{Mutlu-Pakdil2016} evaluates the peak of SMBHs, while
figs 2 and 6 of \cite{Kovetz:2016kpi} assesses astrophysical BHs from gravitational waves (GW),
and can be combined with Fig.~1 from the theoretical study of \cite{Christian:2018mjv} to evaluate the peak of SBHs.

The SBHs part of the distribution can proceed from fitting both Fig.~1 from \cite{Christian:2018mjv} and fig. 6 of \cite{Kovetz:2016kpi}, with a power law distribution. Then, taking their geometrical mean while variance is evaluated from their maximum variations for both standard deviations, yields an evaluation with standard deviation as $M_S=10^{1.09^{+0.27}_{-0.48}}$. 

The SMBHs distribution can be assessed with Fig. 10 of \cite{Mutlu-Pakdil2016} as it appears as a double power law. An evaluation of that distribution mean and variance for SMBHs provides $M_{SM}=10^{11.69^{+0.20}_{-0.39}}$.

Since knowledge for the two remaining peaks does not exceed that presented in Fig 7 of \cite{Garcia-Bellido:2017fdg}, we used it as indication of the relative peak heights and standard deviations of their Gau\ss ian shapes, since the mass axis is providing numerical values, to obtain estimates for the remaining peaks for Primordial and Intermediate mass BH, $M_P=10^{1.96^{+0.14}_{-0.21}}M_\odot$ and $M_I=10^{4.19^{+0.14}_{-0.21}}M_\odot$. We are well aware that these estimates are not based on strict modeling nor observations but, since no possibility was given to evaluate the errors on this method, we considered the variances obtained as lower estimates of errors and continued to treat them as mere variances.

As those peaks span several orders of magnitude in mass, we estimated the mean mass from these evaluations by combining them together in a weighted geometric mean. Putting together those evaluations with a geometrical weighted average and conservative treatment of variance yields $\langle M \rangle= \left(M_S^{1.63} M_P^{2.58} M_I M_{SM}^{1.32}\right)^\frac{1}{6.53}$, an estimate of the average BH mass
\begin{align}
    \left\langle M_{BH}\right\rangle \sim & 10^{4.04^{+0.49}_{-0.61}}M_\odot. \label{eq:AvgBHmassU}
\end{align}

\subsubsection{boundary volume}

Putting together the average mass estimate (\ref{eq:AvgBHmassU}) into the BH boundary volume (\ref{eq:BHBounVol}), one gets the estimate for the Universe's boundary volume
\begin{align}
    \text{Vol} \left( \partial \mathcal{M} \right) \sim &  \mathcal{N}_{BH}10^{107.5^{+2.5}_{-3.1}} m^3.\label{eq:BounVol}
\end{align}
\subsection{Topological cosmological constant}
Combining the expression (\ref{eq:Topolambdabody}) with the evaluation of the average volume through the boundary volume (\ref{eq:BounVol}), we get the final result
\begin{align}
\Lambda \approx 10^{-52.9_{-1.3}^{+1.5}} \, k_E\mathcal{C},\label{eq:TopoLambdaKC}
\end{align}
which can be further approximated using $k_E\approx1$ into
\begin{align}
\Lambda \approx 10^{-52.9_{-1.3}^{+1.5}} \, \mathcal{C}.\label{eq:TopoLambdaWithC}
\end{align}

\subsubsection{Confrontation with observed cosmological constant}

From Planck observations \cite{Planck2018}, we currently evaluate $\Lambda_O=10^{-51.08\pm0.01}m^{-2}$. In the case of a 4-manifold with null sectional curvature, Ref.~\cite{ASENS, Hoffman} evaluated the isoperimetric constant $\mathcal{C} = 11.8$. Using it as an evaluation for the topological cosmological constant one obtains
\begin{align}
 \Lambda \approx 10^{-51.8^{+1.5}_{-1.3}} m^{-2},   
\end{align}
which is compatible with the observed value. The topology of the Universe can therefore be considered as a serious candidate for the cosmological constant, without raising the usual cosmological constant fine tuning problem.

\subsubsection{Evaluation of the Universe isoperimetric constant}

Since the isoperimetric constant of the Universe remains unknown, although its order of magnitude is expected in the range of $\approx 10$, this evaluation (\ref{eq:TopoLambdaWithC}) through the average BH mass can be combined with cosmological constant observations \cite{Planck2018} to appraise its actual value, and therefore characterise the topology of the Universe with
\begin{align}
    \mathcal{C}=10^{1.82^{+1.31}_{-1.51}}.
\end{align}

\section{\label{sec:Conclusion}Conclusion}

In this work, we have discussed the application to the cosmological constant problem of the topological invariants examined in \citep{Lorca}, introduced as Lagrange multipliers, that are compatible with GR for cosmological manifolds. At the level of the Einstein-Hilbert action, this results in the emergence of a topological effective cosmological constant from the Euler number. To do so we restricted an Einstein-Cartan theory to solutions with GR-like behavior at the level of the field equations, a sufficient but not necessary condition to recover GR. As we linked the torsion to the topological invariants, the extremely strong attractor behaviour of the invariant solutions, fixing the manifold topology, allows to consider gravity as GR for most of the observable Universe.

The resulting effective topological cosmological constant (TCC), assuming the bare constant is set to zero by symmetry considerations, can then be computed as founded on the inverse square root of the volume of the manifold boundary, the volume of all black holes in the Universe history, and therefore yields an evaluation, that carries the caveats of our current knowledge on BH distribution in the universe, encompassing the observed value of the cosmological constant \citep{Hinshaw:2012aka,Bennett:2012zja,Aghanim:2012vda,Ade:2013sjv,Planck2018}. That value can then be considered seriously as sourced by the topology of the Universe, in particular its Euler number. This also solves the fine tuning problem of the cosmological constant \citep{Weinberg:1988cp,Weinberg:2008zzc}, since the inverse square root of the volume of the boundary of the Universe is quite naturally much smaller than the Planck density, and links the coincidence problem with the distribution of BHs in the Universe. Moreover, we used the observed value of the cosmological constant, combined with our evaluation, to provide a first evaluation of the isoperimetric constant $\mathcal{C}^2$ of the Universe, that is the ratio of the compact volume domain over its boundary hypersurface volume. In light of these results, we argue that the dynamical coupling zero-forms act as a kind of topological quintessence field, effecting the accelerated expansion of the Universe to comply with its BH boundary.

As this result proceeds from BH distribution evaluation that involved GW input, we can be confident that future GW detections and other improvement in our understanding of the BH population should improve our determination of the TCC and of the Universe's isoperimetric constant. A dynamic theory of the topology of the universe in the line of emerging geometry \citep{Brandenberger:2008nx,LevasseurPerreault:2011mw,Brandenberger:2012um,Afshordi:2014cia} should provide the framework for understanding this choice of $\mathcal{C}^2$ and perhaps for solving as well the coincidence problem
\citep{Amendola:1999er,TocchiniValentini:2001ty,Zimdahl:2001ar,Zimdahl:2002zb} by producing a mechanism of selection for the Betti numbers of the space-time manifold.

\section*{Acknowledgments}

The authors wish to thank M.Fontanini and E. Huguet for very useful discussions. The work of M.Le~D. has been supported by Lanzhou University starting fund and PNPD/CAPES20132029. M.Le~D. also wishes to acknowledge IFT/UNESP for hosting the beginning of this project. 

\appendix

\section{\label{sec:BHsurf}Black Hole boundary of the universe}

Since we have made the physically reasonable assumption that a cosmological space-time boundary should be identified mostly with its $3$-punctures (the third Betti number $b_3 \left( \mathcal{M} \right)$), i.e the number of singularities it contains. These are, in turn, essentially the number of black holes ($\mathcal{N}_{BH}$, where BH stands for black hole from now on). Thus, we have the approximation:
\begin{align}
b_3 \left( \mathcal{M} \right) \simeq  \mathcal{N}_{BH} \quad .
\end{align}

Hence, the estimation of the boundary of the universe's manifold can be obtained systematizing in the following way:
\begin{enumerate}
    \item the boundary is mostly made of BHs,
    \item the hypersurface of the boundary is made of the sum for all BH of their integrated horizon surface, this is:
    \begin{gather}
        \text{Vol} \left( \partial \mathcal{M} \right)=\sum_{BH}\int_{\text{BH life}}d\tau 4\pi R_{Sch}^2,
    \end{gather}
    \item the third betti number $b_3$, thought of as the number of $3$-punctures in the manifold, is basically equal to the number of BH: $b_3 = \mathcal{N}_{BH}$ (see Eq.~\ref{nEtop})
\end{enumerate}
We can then deduce the average boundary volume factor to follow from
\begin{align}
    \left\langle \text{Vol} \left( \partial \mathcal{M} \right) \right\rangle= & \frac{1}{\mathcal{N}_{BH} \sim b_3}\sum_{BH}\int_{\text{BH life}}d\tau 4\pi R_{Sch}^2 \nonumber\\
    \simeq & \int_{\left\langle \text{BH}\right\rangle \text{ life} }d\tau 4\pi R_{Sch}^2.\label{eq:AvgBounVol}
\end{align}

Thus, such average represents the volume enclosed by the horizon of an average mass BH, since its manifold causally disconnected interior can be considered as not part of the Universe.

We will start by calculating a given BH volume as a function of its mass and then apply it to the average BH mass. To simplify, we take a Schwarzschild BH, neglect its collapse time and integrate its horizon over the time it needs to decay through Hawking radiation.

\subsection{BH volume}
As can be seen in \cite[p418]{Carroll:2004st}, Hawking radiation leads to BH evaporation in a finite time estimated as
\begin{align}
    \tau_{BH}=& \left(\frac{M}{M_\odot}\right)^310^{71} s\\
    \Rightarrow c\tau_{BH}=& 3\left(\frac{M}{M_\odot}\right)^310^{79} m,
\end{align}
so the surface of the boundary from a BH of creation mass $M$ is of order its surface $S_{BH}$
times its development till evaporation $c\tau_{BH}$. The surface being evaluated, approximating the BH to a Schwarzschild one, with the Schwarzschild radius
\begin{align}
    r_{S}=& 2\frac{GM}{c^2},
\end{align}
we can find it, the solar mass Schwarzschild radius being
\begin{align}
    r_{S\odot}=& 2\frac{GM_\odot}{c^2}=2950m,
\end{align}
 in the form
\begin{align}
    S_{BH}=& 4\pi r_{S}^2=4\pi r_{S\odot}^2\left(\frac{M}{M_\odot}\right)^2\nonumber\\
    =&1.09\times 10^8\left(\frac{M}{M_\odot}\right)^2 m^2,
\end{align}
and thus estimate the order of magnitude of the volume of the boundary from one BH as
\begin{align}
    V=& S_{BH}\times c\tau_{BH}\nonumber\\
    =& 3.28\times 10^{87}\left(\frac{M}{M_\odot}\right)^5 m^3.
\end{align}

Now, as can be seen in \cite[p412]{Wald:1984rg}, this estimate comes from
\begin{align}
    \frac{dM}{d\tau}\propto & -\frac{1}{M^2}\nonumber\\
    =&-\frac{1}{C M^2}\\
    \Leftrightarrow c d\tau_{BH}=&-c\, C M^2 dM.
\end{align}
Since the time estimate is given by the integral
\begin{align}
    \tau_{BH}=& \int_M^0 \left(-C M^2\right) dM = \frac{C}{3}M^3\\
    \Rightarrow C=&\frac{3\times 10^{71}}{M_\odot^3},
\end{align}
we can then evaluate the total actual volume of the boundary from one BH as
\begin{align}
    V=& \int_{\tau_i}^{\tau_f} S_{BH}\times cd\tau_{BH}\nonumber\\
    =& \int_M^0 \left(-S_{BH}c\frac{3\times 10^{71}M^2}{M_\odot^3}\right)dM\nonumber\\
    =& \int^M_0 9.81\times 10^{87}\frac{M^4}{M_\odot^5}dM\nonumber\\
    =& 1.96\times 10^{87}\left(\frac{M}{M_\odot}\right)^5 m^3.\label{eq:1BHvol}
\end{align}

\subsection{Average BH}

Examining the theoretical figure of \cite{Garcia-Bellido:2017fdg} (Fig 7, indication of shape and proportions),
Fig 10 of \cite{Mutlu-Pakdil2016},
for Super Massive BHs (SMBHs),
figs 2 and 6 of \cite{Kovetz:2016kpi} for astrophysical BHs from gravitational waves (GW),
and from \cite{Christian:2018mjv} 
the figures can be used to evaluate a typical average BH.

Fig.~1 from the theoretical study of
\cite{Christian:2018mjv}, combined with the prediction on GW detection from \cite{Kovetz:2016kpi}, fig. 6, allow an evaluation of the stellar made BH distribution by fitting both with a power law distribution and taking their geometrical mean while variance is evaluated from their maximum variations for both standard deviations. The evaluation yields $M_S=10^{1.09^{+0.27}_{-0.48}}$. Fig. 10 of \cite{Mutlu-Pakdil2016} yields the mean and variance evaluation of that distribution for SMBHs, and, combined with fig.~7 from \cite{Garcia-Bellido:2017fdg}, allows for an evaluation of the remaining peaks for Primordial and Intermediate mass BH. The results give  $M_{SM}=10^{11.69^{+0.20}_{-0.39}}$, $M_P=10^{1.96^{+0.14}_{-0.21}}M_\odot$ and $M_I=10^{4.19^{+0.14}_{-0.21}}M_\odot$. Putting all these evaluations together in a weighted geometric mean, $\langle M \rangle= \left(M_S^{1.63} M_P^{2.58} M_I M_{SM}^{1.32}\right)^\frac{1}{6.53}$, allows one to estimate the average and standard deviation-generated variances of the typical BH mass, estimating variances with propagation and uncertainty on our evaluations,

\begin{align}
    \left\langle M_{BH}\right\rangle \sim & 10^{4.04^{+0.49}_{-0.61}}M_\odot \label{eq:AvgBHmass}
\end{align}

and use it to evaluate the corresponding 
value of the topological cosmological constant from Eq. (\ref{eq:Topolambda}).

\subsection{Topological cosmological constant}
Combining Eqs.~(\ref{eq:AvgBHmass}) and (\ref{eq:1BHvol}) one gets the average boundary volume
\begin{align}
    \left\langle \text{Vol} \left( \partial \mathcal{M} \right) \right\rangle\sim & 1.96 \times 10^{87}\left(10^{4.04^{+0.49}_{-0.61}}\right)^5 m^3 \nonumber\\
    \sim & 10^{107.5^{+2.5}_{-3.1}} m^3,
\end{align}
which in turns yields the topological cosmological constant from Eq. (\ref{eq:Topolambda}), using $k_E\approx1$,
\begin{align}
\Lambda \approx 10^{-52.9_{-1.3}^{+1.5}} \, \mathcal{C}\label{eq:TopoLambdaC}
\end{align}

It seems that the isoperimetric constant $\mathcal{C}$ is in the order of $\sim 10^1$ See for instance \cite{ASENS, Hoffman} with a calculated $\mathcal{C} = 11.8$ for a 4-manifold with null sectional curvature\footnote{The calculation goes as follows: $\mathcal{C} = \left\{ 2^3 \cdot \frac{4 \pi}{3}  \right\}^{\frac{1}{2}} \cdot \left( 2 \cdot \omega_4 \right)^{-\frac{1}{8}} \simeq 11.8 $ , where $\omega_4$ is the volume of the four dimensional unit sphere.}. Then:
\begin{align}
 \Lambda \approx 10^{-51.8^{+1.5}_{-1.3}} m^{-2}    
\end{align}

which is compatible with the value 
\begin{align}
    \Lambda_O=&(4.24 \pm 0.11) \times
10^{-66} eV^2 \nonumber\\
=&\left(1.97\times 10^{-7}\right)^{-2}(4.24 \pm 0.11) \times
10^{-66} m^{-2} \nonumber\\
=&\left(8.35 \pm 0.22\right)\times 10^{-52} m^{-2} \nonumber\\
=&10^{-52+0.92\pm0.01}=10^{-51.08\pm0.01}
\end{align}
observed \cite[e.g.][]{Planck2018}. This allows us to conclude that topology can explain the order of magnitude and value of the cosmological constant.

As $\mathcal{C}$ is actually unknown for our Universe, the topological evaluation of the cosmological constant (\ref{eq:TopoLambdaC}) can be combined with observations from \cite{Planck2018} to obtain its measurement: our Universe's topology can be characterised with its isoperimetric constant
\begin{align}
    \mathcal{C}=10^{1.82^{+1.31}_{-1.51}}.
\end{align}

\section{\label{sec:cartan} Formalism overview}

For the following sections and appendices let $\Omega^k \left( M \right)$ be the space of smooth complex exterior differential forms of degree $k$ over a $4$-manifold $M$ and $\Omega^k \left( M \right)^* $  its dual. Let $\Omega \left( M \right) = \oplus^4_{k=0} \Omega^k \left( M \right) $ its graded algebra. The usual operators on $\Omega \left( M \right)$ are well defined: 
\begin{description}[]
    \item[Exterior derivative] \hfill \\ the differential $d: \Omega^k \left( M \right) \rightarrow \Omega^{k+1} \left( M \right)$ with $0 \leq k \leq 3$,
    \item[Exterior co-derivative] \hfill \\ the co-differential $d^*: \Omega^k \left( M \right) \rightarrow \Omega^{k-1} \left( M \right) $ with $1 \leq k \leq 4$ and
    \item[Hodge dual] \hfill \\ the dual $\star: \Omega^k \left( M \right) \rightarrow \Omega^{4-k} \left( M \right)$ with $0 \leq k \leq 4$.
\end{description}

\subsection{\label{subsec:VEPformalism} Vielbein-Einstein-Palatini (VEP) formalism}

The standard General Relativity (GR) is derived from the Einstein-Hilbert action, expressed in terms of the metric and the Ricci tensor. However, the vacuum GR can equivalently proceed from the Vielbein-Einstein-Palatini (VEP) action \cite{Dadhich}. The field equations are then obtained through the Einstein-Cartan formalism, for which the basic geometrical ontologies are shown in table~\ref{tab:VEP-geometrical-elements}\footnote{We are here introducing the notation $A^{\left( * \right)}_{ab} : = \frac{1}{2} \epsilon_{abcd} A^{cd}$ for the Lie dual acting over any $A^{cd} \in \Omega \left( \mathcal{M} \right)$ with two spin indices $c,d$.}. 
\begin{center}
\begin{table}
\begin{centering}
\begin{tabular}{c|c}
\hline 
$e^{a}\in\Omega^{1}\left(M\right)$  & $1$-form frame basis \tabularnewline
 & ( or vierbien ) \tabularnewline
$e_{a}\in\Omega^{1}\left(M\right)^{*}\simeq T\left(M\right)$  & $1$-vectors frame basis \tabularnewline
 & ( or dual vierbein ) \tabularnewline
$g_{ab}:=g\left(e_{a},e_{b}\right)\in\Omega^{0}\left(M\right)$  & metric tensor \tabularnewline
 & components \tabularnewline
$\hat{\Sigma}^{ab}:=\frac{1}{2}e^{a}\wedge{e}^{b}\in\Omega^{2}\left(M\right)$  & Palatini $2$-form \tabularnewline
 & ( Normalized ) \tabularnewline
$d\mu:=\frac{1}{3}\hat{\Sigma}^{ab}\wedge\hat{\Sigma}_{ab}^{(*)}\in\Omega^{4}\left(M\right)$  & the volume $4$-form \tabularnewline
 & ( $\star\left(1\right):=d\mu$ ) \tabularnewline
$\bar{\omega}_{\;\;b}^{a}\in\Omega^{1}\left(M\right)$  & the connection $1$-form \tabularnewline
 & ( Levi - Civita ) \tabularnewline
\hline 
\end{tabular}
\par\end{centering}
\caption{\label{tab:VEP-geometrical-elements}VEP geometrical elements}
\end{table}
\par\end{center}

In Einstein-Cartan formalism, the curvature $2$-form $\bar{R}^a_{\;\;b} \in \Omega^2 \left( {M} \right)$ and the torsion $2$-form $\bar{T}^a \in \Omega^2 \left( M \right)$ are defined as: :
\begin{align}
\label{originalcurvature} 
\bar{R}^{a}_{\;\;b} &:= d \bar{\omega}^{a}_{\;\;b} + \bar{\omega}^a_{\;\;c} \wedge \bar{\omega}^{c}_{\;\;b} \;\;, \\
\label{originaltorsion} \bar{T}^a &:= d_{\bar \omega} e^a = d e^a + \bar{\omega}^{a}_{\;\,b} \wedge e^b \;\;,
\end{align}
where $d_{\bar \omega}$  denotes the exterior covariant derivative with respect to the connection $1$-form $\bar \omega^{a}_{\;\;\,b}$. The field equations are obtained after the variations of the following gravitational action, defined over a closed manifold $M$ ( i.e $\partial M = \emptyset$ ) : 
\begin{align}
\nonumber \small{ {S_{G}}[{e}^c, \bar{\omega}^a_{\;\;b}, \Lambda] } & := \int\limits_{M} \frac{1}{\kappa} \mathfrak{L}_G \quad \quad \quad | \quad \quad \mathfrak{L}_G \in \Omega^4 \left( M \right) \\
\label{EHaction} & = \int\limits_{M} \frac{1}{\kappa} \left( \bar{R}^{\left( * \right)}_{ab} \wedge \hat{\Sigma}^{ab} - \Lambda {d \mu} \right) \;\;,
\end{align}
 by considering the connection $\bar{\omega}^a_{\;\;b}$ and the vierbein\footnote{Recall that a vielbein frame, for an n-dimensional manifold $M$ designate $e^a$ either as vielbein (viel: german, many) in any dimension, or vierbein (vier: german, four), or tetrad, in the $n=4$ case.} $e^a$ as independent. In the latter, $\kappa = 8 \pi G c^{-4}$ is the Gravitation constant and $\Lambda$ is a cosmological constant-type term. At this point, it can take any value. Later on it can describe the source for accelerated expansion. Variation of (\ref{EHaction}) with respect to the vierbein ($\delta e^a$) gives the Einstein's field equations, while the variation with respect to the connection ($\delta \bar{\omega}^{ab}$) gives the null torsion condition. The last field equation, particularly useful, sets the connection to be the Levi-Civita through the relation:
\begin{equation}\label{LeviCivita}
\bar{\omega}^{a}_{\;\;\,b} = -\frac{1}{2} \left[ i^a \left(d e^{\flat}_b\right) - i_b \left( d e^a \right) + i^{a}_{\;\;\,b} \left( d e^{\flat}_c \right)e^c \right] \;\;,
\end{equation}
where $i_a := i_{e_a} : \Omega^k \left( M \right) \rightarrow \Omega^{k-1} \left( M \right)$ is the slant, interior product or contraction with respect to the dual basis $e_a$. By abuse of notation we are also writing the analogous operation induced by the metric as $i^a := g^{-1}_{ab} i_b := g^{ab} i_b $. We have also used the so called musical isomorphisms between differential forms and tangent vectors $\flat: \Omega^k \left( M \right)^* \rightarrow \Omega^k \left( M \right)$ and $\sharp: \Omega^k\left( M \right) \rightarrow \Omega^k \left( M \right)^* $.

The action (\ref{EHaction}) is usually understood as mostly a formal device to obtain the local field equations. We aim to extend this formalism in two ways: 
\begin{enumerate}[a)]
    \item We allow for the manifold to be compact but not necessarily closed. We will denote this change by  $M \mapsto \mathcal{M} $ with a non trivial boundary $\partial \mathcal{M} \neq \emptyset$.;
    \item We need this modified action to define well posed variational expressions. We mainly understand this well posed-ness naively as a physically sound and unambiguously defined field equations;
\end{enumerate}
 It can be easily proven that the first term in the action (\ref{EHaction}) does not contribute with boundary terms. However, the second must be taken carefully. This situation is taken into account in Sec.~\ref{subsec:wellposed} where we find the right modifications for the action to be well posed.

\subsection{\label{TopManifold} Characteristic classes and some topology}

Characteristic classes can be described as global invariants that measure the deviation of a local product structure from a global product structure. On an oriented space-time manifold $\mathcal{M}$ taken as a compact $4$-manifold,
the characteristic classes of the tangent bundle available and consistent with the space-time topology are \cite{Sengupta,Kaul,donaldson,Babourova,Zanelli,Chandia,Liko}: 
\begin{description}[]
    \item[The Pontryagin class] \hfill \\  $ C_P := p_1 \left( T \mathcal{M} \right) \in H^4 \left( \mathcal{M} ; \mathbb{Z}\right) \simeq \mathbb{Z} $,
    \item[The Euler class] \hfill \\ $C_E := e \left( T \mathcal{M} \right) \in H^4 \left( \mathcal{M} ; \mathbb{Z}\right) \simeq \mathbb{Z}$, and
    \item[The Chern class] \hfill \\ $C_{N} := c_2 \left( T \mathcal{M} \right) \; \in H^{4} \left( \mathcal{M} ; \mathbb{Z} \right)$ (Nieh-Yan).
\end{description}
They can be written as exact differentials of $3$-forms, or equivalently as the $4$-forms:
\begin{align}
\label{Pontryagin} \small C_{P}  & =  \frac{1}{8 \pi^2} d \left( \omega^a_{\;\;b} \wedge \left[ R^b_{\;\;a} - \frac{1}{3} \omega^b_{\;\;c} \wedge \omega^c_{\;\;a} \right] \right) \\ 
\nonumber & = \frac{1}{8 \pi^2} R^a_{\;\;b} \wedge R^b_{\;\;a} \;\;,\\
\label{Euler} \small C_{E} & = \frac{1}{16 \pi^2}  d  \left(\epsilon_{abdc} \omega^{ab} \wedge \left[ R^{cd} - \frac{1}{3} \omega^c_{\;\;f} \wedge \omega^{fd}\right] \right) \\ 
\nonumber & = \frac{1}{8 \pi^2} R^{ab} \wedge R^{\left( * \right)}_{ba} \;\;, \\
\nonumber \small C_N & = \small d \left( e^a \wedge T_a \right) \\ 
\label{NiehYan} & = T^a \wedge T_a - R_{ab} \wedge \Sigma^{ab} \;\;,
\end{align}
respectively, where we have raised and lowered indices by means of the metric. While the first two classes are dimensionless and only involve the connection, $C_{N}$ has dimensions of $\left[ \text{length}^2 \right] $ and depends explicitly on the torsion and, is thus null in the absence of it.

The following is an important standard result that we give without proof but will be used in the subsequent Appendices:
\begin{thm}\label{topnumber}
Denote the integration of the characteristic densities by $n_j := \int_{\mathcal{M}} C_j$, for any simply connected oriented smooth $4$-manifold, $n_j \in \mathbb{Z}$  (for $j = E , P$) \cite{milnor,donaldson} .
\end{thm}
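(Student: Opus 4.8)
The cleanest plan is to identify $n_E$ and $n_P$ with honest integers attached to the topology of $\mathcal{M}$, by two interlocking routes, after checking that the curvature $4$-forms $C_E$, $C_P$ of \eqref{Euler}, \eqref{Pontryagin} are exactly the Chern--Weil representatives of integral characteristic classes with the stated normalizations. Route~A (characteristic-class integrality): by Chern--Weil theory the de Rham classes $[C_E]$, $[C_P]$ are the images in $H^4_{\mathrm{dR}}(\mathcal{M})$ of the integral Euler class $e(T\mathcal{M})\in H^4(\mathcal{M};\mathbb{Z})$ and first Pontryagin class $p_1(T\mathcal{M})\in H^4(\mathcal{M};\mathbb{Z})$; since $H^4$ of a closed oriented simply connected $4$-manifold is torsion-free and canonically $\cong\mathbb{Z}$, pairing an integral $4$-class against the fundamental class $[\mathcal{M}]\in H_4(\mathcal{M};\mathbb{Z})$ returns an integer, i.e. $n_j=\langle [C_j],[\mathcal{M}]\rangle\in\mathbb{Z}$. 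Route~B (index theorems): the Chern--Gauss--Bonnet theorem gives $n_E=\chi(\mathcal{M})$ and the Hirzebruch signature theorem gives $n_P=3\,\mathrm{sign}(\mathcal{M})$, both manifestly integral; this route also makes transparent the factorization used in \eqref{nEbody1}.

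For Route~A the steps, in order, are: \textbf{(1) Chern--Weil bookkeeping.} Pick a metric connection on $T\mathcal{M}$ with curvature $R^a_{\;\;b}$ and verify by an index computation that $\tfrac{1}{8\pi^2}R^a_{\;\;b}\wedge R^b_{\;\;a}$ equals $-\tfrac{1}{8\pi^2}\mathrm{tr}(R\wedge R)$, and that $\tfrac{1}{8\pi^2}R^{ab}\wedge R^{(*)}_{ba}$ equals the Pfaffian density $\mathrm{Pf}(R)/(2\pi)^2$ written in frame components built from $\hat{\Sigma}^{ab}$ and the Lie dual, so the normalizations in \eqref{Euler}, \eqref{Pontryagin} are the standard ones. \textbf{(2) Invariance.} Recall that these polynomials are closed and their cohomology classes are independent of the connection (transgression argument); this is what makes the $3$-form primitives displayed in \eqref{Pontryagin}, \eqref{Euler} meaningful. \textbf{(3) Universal integrality.} Invoke that $e\in H^4(BSO(4);\mathbb{Z})$ and $p_1\in H^4(BSO;\mathbb{Z})$ are integral classes of the classifying spaces and pull back along a classifying map $\mathcal{M}\to BSO(4)$ of $T\mathcal{M}$; the real images of these pullbacks are $[C_E]$, $[C_P]$ by Steps~(1)--(2). \textbf{(4) Evaluation.} For a closed oriented $4$-manifold, $\int_{\mathcal{M}}$ of a de Rham $4$-form computes the Kronecker pairing of its class with $[\mathcal{M}]$; applied to an integral class this lies in $\mathbb{Z}$. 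Simple connectivity enters only to give the clean torsion-free picture $H^4(\mathcal{M};\mathbb{Z})\cong\mathbb{Z}$ and to match the standing hypotheses (oriented, simply connected, smooth) used elsewhere in the paper; it is not needed for integrality itself.

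The genuine obstacle is that the $\mathcal{M}$ relevant here is \emph{not closed}: it is a compact $4$-manifold with non-empty boundary $\partial\mathcal{M}$, so Step~(4) has no absolute fundamental class to pair against, and Route~B acquires boundary corrections (the second fundamental form in Chern--Gauss--Bonnet, the Atiyah--Patodi--Singer $\eta$-invariant in the signature theorem). Three remedies suggest themselves, and I would present whichever is lightest: (i) pass to the closed double $\mathcal{M}\cup_{\partial\mathcal{M}}\overline{\mathcal{M}}$ and check that the characteristic numbers of $\mathcal{M}$ descend; (ii) argue relatively, using the transgression $3$-forms exhibited in \eqref{Pontryagin}, \eqref{Euler} so that $\int_{\mathcal{M}}C_j$ differs from an integer only by $\int_{\partial\mathcal{M}}$ of an explicit Chern--Simons-type form, which the model's boundary conditions are then required to kill; or (iii) appeal directly to the well-posed Dirichlet-to-Neumann boundary data imposed later in the paper, which fixes those boundary contributions. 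I would therefore state and use the integrality exactly as the standard closed-manifold theorem (citing \cite{milnor,donaldson}) and flag that the passage to the bounded $\mathcal{M}$ of the model is what must be handled with the boundary hypotheses --- that patching, not the integrality, being the substantive point.
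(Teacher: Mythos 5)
The paper does not actually prove this statement: it is announced as ``an important standard result that we give without proof'' and delegated to \cite{milnor,donaldson}. Your Route~A (Chern--Weil representatives of the integral classes $e(T\mathcal{M})$ and $p_1(T\mathcal{M})$ paired against the fundamental class) and Route~B (Chern--Gauss--Bonnet giving $n_E=\chi(\mathcal{M})$, Hirzebruch giving $n_P=3\,\mathrm{sign}(\mathcal{M})$) are precisely the arguments those references supply, so for a \emph{closed} oriented smooth $4$-manifold your proof is complete and is, in substance, the proof the paper is implicitly invoking. Your remark that simple connectivity is not needed for integrality itself, only for the clean identification $H^4(\mathcal{M};\mathbb{Z})\cong\mathbb{Z}$, is also correct.

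The one substantive issue is the one you raise yourself and do not close: the theorem as stated says ``for any simply connected oriented smooth $4$-manifold,'' with no closedness hypothesis, and the $\mathcal{M}$ to which it is later applied (e.g.\ in Eqs.~(\ref{nEbody1}) and (\ref{nEtop})) is explicitly compact with $\partial\mathcal{M}\neq\emptyset$. Without closedness there is no absolute fundamental class, Chern--Gauss--Bonnet and the signature theorem acquire boundary corrections (second fundamental form, $\eta$-invariant), and $\int_{\mathcal{M}}C_j$ can a priori take any real value, so the statement is false as literally written. Your proposed remedies (doubling, controlling the Chern--Simons boundary term via the transgression $3$-forms of Eqs.~(\ref{Pontryagin})--(\ref{Euler}), or appealing to the boundary data of the model) are the right menu, but none is executed, so strictly a gap remains. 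To be fair, the paper leaves exactly the same gap; your write-up has the merit of naming it. If you want a complete proof of what the paper actually uses, option (ii) is the lightest: the $3$-form primitives are already displayed, so the deviation of $n_j$ from an integer is an explicit boundary integral that the model's boundary conditions ($\delta\omega_{ab}|_{\partial\mathcal{M}}=\delta e_a|_{\partial\mathcal{M}}=0$ and the counterterms of Eq.~(\ref{Stop})) must be shown to annihilate.
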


\begin{rem}
Since we are allowing $\Omega \left( \mathcal{M} \right)$ to be the graded algebra of smooth \textit{complex} exterior differential forms, we are forced to modify the last result to read:
\begin{align}\label{topnumbers}
n_j = \mathfrak{Re} \int\limits_{\mathcal{M}} C_j \; \in  \; \mathbb{Z} \;\; , \;\, \text{with} \;\; j \in J \;\; ,
\end{align}
which we subsequently call \textit{the topological numbers}.
\end{rem}

We also provide here the informal definition of the $n^{th}$ Betti numbers of a manifold $\mathcal{M}$, which are the rank of the $n^{th}$ \textit{homology group} $H_n \left( \mathcal{M} ; \mathbb{R} \right)$. This is:
\begin{align}\label{betti}
b_n := \dim \left( H_n \left( \mathcal{M} ; \mathbb{R} \right) \right) \;\, , \,\; n = 0,1, \cdots, \dim \mathcal{M} \;\;.
\end{align}
These are also topological invariants of $\mathcal{M}$ and, intuitively, they represents the number of $n$-dimensional \textit{punctures or holes on a topological space}.

\subsection{Contortion and the first order formalism}

When the characteristic classes are included in the action, their presence will allow for a theory of gravity with non-null torsion. Furthermore, since torsion becomes non trivial when considering the above topological terms, we make use of the \textit{first order formalism}  \cite{Mielke,Cho,Hatzinikitas,Cambiaso,Leigh,Petkou,Randal} to take its effects into account. Here, the contribution from the new terms added at the level of the action results in a second term added to the Levi-Civita connection. Thus, the new connection $1$-form and torsion yield:
\begin{align}
\label{connection} \omega^{a}_{\;\;b} & := \bar \omega^{a}_{\;\;b} + K^{a}_{\;\;b}  & \in \Omega^1 \left( \mathcal{M} \right) \; ,\\
\label{contortion} d_{\omega} e^a & := T^a = K^{a}_{\;\;b} \wedge e^b & \in \Omega^2 \left( \mathcal{M} \right), 
\end{align}
where $K^{a}_{\;\; b}$ is known as the \textit{contortion} $1$-form, which in turn can be inverted using the slant product to write:
\begin{equation}\label{contortion1}
{K^a_{\;\;\,b}} =  - \frac{1}{2}\left\{ {{i^a}\left( {{T_b}} \right) - {i_b}\left( {{T^a}} \right) - {i^a_{\;\;\,b}}\left( {{T_c}} \right) \wedge {e^c}} \right\} \;\;.
\end{equation}

I
n this sense, 
the contortion $K^{a}_{\;\;\, b}$ is responsible for the appearance of torsion $T^a$. Similarly, the curvature $2$-form can 
be written in the following way
\begin{align}\label{Riemanntensorgral}
R^a_{\;\;b} =  \bar{R}^a_{\;\;b} +  \Theta^a_{\;\;b}  \quad \in \Omega^2 \left( \mathcal{M} \right) \;\;,
\end{align}
where Eqs.~(\ref{originalcurvature}) and (\ref{connection}) have been used and the term
\begin{align}\label{torcurv}
\Theta^a_{\;\;b} := d_{\omega} K^a_{\;\;b} - K^a_{\;\;c} \wedge K^{c}_{\;\;b} \quad \in \Omega^2 \left( \mathcal{M} \right) \;\;,
\end{align}
is the contortion related part of the total curvature. To complete the picture, the Bianchi identities
\begin{align}
\label{Bianchi} d_{\omega} T^a = R^a_{\;\;b} \wedge e^b \;\; , \; \; d_{\omega} R^a_{\;\;b} = 0 \;\;,
\end{align}
hold for any $R^a_{\;\;b}$ and $T^a$.

The first order formalism then assumes this new connection $1$-form $\omega^a_{\;\;b}$ as well as the vierbein $e^a$ to be independent quantities. The independent variation of the latter yield the field equations of the theory.

\subsection{Further notation and terminology}

Since we have assumed that our space-time manifold $\mathcal{M}$ is compact but not closed, i.e. with a non-trivial boundary $\partial \mathcal{M} \neq \emptyset$. We begin by recalling that, given the Hodge dual\footnote{The hodge dual operator is defined over the veirbein basis as $\star \left( e^{a_1} \wedge \cdots \wedge e^{a_n} \right) := \frac{1}{\left( 4 - n\right)!} \epsilon^{a_1 \cdots a_n}_{\;\;\;\;\;\; \;\;\;\;\;a_{n+1} \cdots a_4} e^{a_{n+1}} \wedge \cdots e^4$ and it extends to the entire space $\Omega \left( \mathcal{M} \right)$ by linearity.} $\star: \Omega^k \left( \mathcal{M} \right) \rightarrow \Omega^{n-k} \left( \mathcal{M} \right)$, the operations ($s = -1 $)
\begin{align}\label{hodge}
\star \star = \left( - 1 \right)^{k\,(n-k)} s \;\;, \quad & \quad \star d^{\star} = \left( - 1 \right)^k s d \star \;\;, \\
\nonumber \star d = \left( - 1 \right)^{k+1} s d^{\star} \;\;\; &\text{on} \;\;\; \Omega^k \left( \mathcal{M} \right) \;\;,
\end{align}
are well defined. Therefore, the $L^2$-inner product induced by the Hodge dual for all $\alpha , \beta \in \Omega^k \left( \mathcal{M} \right)$ is naturally defined in the following way:
\begin{align}
\label{L2inner} & \left( \alpha , \beta \right) = \int\limits_{\mathcal{M}} \alpha \wedge \star \beta = \int\limits_{\mathcal{M}} \left\langle \alpha , \beta \right\rangle d \mu \;\; ,
\end{align}
while the $L^2$-norm in $\mathcal{M}$ will be written as:
\begin{align}\label{L2norm}
&\left\| \alpha \right\|^2_{L^2} = \left( \alpha , \alpha \right) = \int\limits_{\mathcal{M}} \left\langle \alpha, \alpha \right\rangle d \mu \;\;,
\end{align}
where, the scalar product $\left\langle \cdot , \cdot \right\rangle : \Omega^1 \left( \mathcal{M} \right) \times \Omega^1 \left( \mathcal{M} \right) \rightarrow \mathbb{C}$ is defined to be the tetrad metric $g^{ab} := \left\langle e^a , e^b \right\rangle$ for the basis, taken to be skew-symmetric as usual and can be naturally extended to $n$-forms by linearity. Hence, for any two $1$-forms $\alpha = \alpha_a e^a$ , $ \beta = \beta_b e^b \in \Omega^1 \left( \mathcal{M} \right)$, we then have:
\begin{align}\label{scalarproduct}
\left\langle \alpha , \beta \right\rangle = \left\langle \beta , \alpha \right\rangle^* = g^{ab} \alpha_a \beta_b \;\; ,
\end{align}
with $\left( \cdot \right)^* : \Omega^0 \left( \mathcal{M} \right) \rightarrow \Omega^0 \left( \mathcal{M} \right)$ being the complex conjugate operation. This naturally leads us to consider the following metric compatibility conditions:
\begin{align}\label{metriccomp}
g^{ab} g_{ab} = \delta^a_{\;\;b} \;\; \text{and} \;\; d_{\omega} \left( g_{ab} \right) = 0 \;\; ,    
\end{align}
allowing us the rising and lowering of spin indices.

Under the same context, the Lie derivative over differential forms is defined as usual:
\begin{align}\label{Liederiv}
\mathcal{L}_{a} \omega := i_{a} d \omega + d i_{a} \omega \;\; , \;\; \omega \in \Omega \left( \mathcal{M} \right)
\end{align}
while, induced by the metric, we can also define:
\begin{align}\label{Liederivdual}
\mathcal{L}^{a}  \omega := g^{ab} \mathcal{L}_b \omega \;\; , \;\; \omega \in \Omega \left( \mathcal{M} \right) \quad.
\end{align}

On the other hand, we will also refer to the following definition in order to obtain some analytic solutions for the curvature:
\begin{defn}[Einstein-type symmetry]\label{Einstein}
A $\Pi_{ab} \in \Omega^2 \left( \mathcal{M} \right)$ such that $\Pi_{ab} = - \star \Pi^{(*)}_{ab}$ is called Einstein-type $2$-form.
\end{defn}
Hence, according to the latter, the following immediate results follow:
\begin{rem}\label{curvA}
The forms $u\hat{\Sigma}^{\flat}_{ab}$ and $v\hat{\Sigma}^{(*)}_{ab}$ , with $u,v \in \Omega^0 \left( \mathcal{M} \right)$ are Einstein-type $2$-forms\footnote{In fact, for $u= v= \Lambda$, the cosmological constant, these solutions are the so called \textit{topological instantons} in a space-time manifold $\mathcal{M}$ homeotopic to $S^4$ \cite{Chandia,Chandia1}.}.
\end{rem}

\section{\label{sec:The model} The dynamical model}

We start by analyzing the important case of the finite volume to later consider a more general case and discuss the well posed-ness of the field equations obtained by the model.

\subsection{\label{subsec:comact} The case of the finite volume space-time manifold with boundary}

This subsection closely follows the material appearing in \cite{BELISHEV2008128}. Given the structure of $\mathcal{M}$ as defined in (\ref{hodge}), the orthogonal decomposition $ \Omega^k \left( \mathcal{M} \right) = \mathcal{E}^{k-1}_D \left( \mathcal{M} \right) \oplus \mathcal{C}^{k+1}_N \left( \mathcal{M} \right) \oplus  \mathcal{H}^k \left( \mathcal{M} \right)$, known as the Hodge decomposition, holds. Here $\mathcal{E}^{k-1}_D \left( \mathcal{M} \right)$ is the set of exact forms of order $k$, $\mathcal{C}^{k+1}_N \left( \mathcal{M} \right)$ is the set of co-exact forms of order $k$ and $ \mathcal{H}^k \left( \mathcal{M} \right) $ is the set of harmonic forms of order $k$.

Furthermore, as the manifold $\mathcal{M}$ has a non-null boundary $\partial \mathcal{M}$, the harmonic forms can be further decomposed as $\mathcal{H}^k = \mathcal{H}^k_D \oplus \mathcal{H}^k_{co}$ or $\mathcal{H}^k = \mathcal{H}^k_N \oplus \mathcal{H}^k_{ex}$, where $D$, $N$ stand for Dirichlet and Neumann, respectively, while $co$, $ex$ stand for co-exact and exact, respectively. In this setting, for any $0 \leq k \leq 4$, the Dirichlet to Neumann operator $\Lambda : \Omega^k \left( \partial \mathcal{M} \right) \rightarrow \Omega^{4-k} \left( \partial \mathcal{M} \right)$ is characterized as follows: Given $\chi \in \Omega^k \left( \partial \mathcal{M} \right)$,  the boundary value problem: 
\begin{align}\label{DNop}
\begin{cases}
\Delta \omega = 0 \;\;, \quad d^{\star} \omega = 0 \\ 
i^* \omega = \chi \;\;,  
\end{cases}
\end{align}
is solvable, where we have written $\Delta: \Omega^k \left( \mathcal{M} \right) \rightarrow \Omega^k \left( \mathcal{M} \right)$ for the Laplace-Beltrami operator. The solution $ \omega \in \Omega^k \left( \mathcal{M} \right) $ is unique up to an arbitrary Dirichlet harmonic field $\lambda_D \in \mathcal{H}^k_D \left( \mathcal{M} \right)$ and $i^{\star}: \Omega^k \left( \mathcal{M} \right) \rightarrow \Omega^{k} \left( \partial \mathcal{M} \right)$ is the inclusion in the boundary operator. Therefore, the Dirichlet to Neumann operator is the form $\Upsilon \left[ \chi \right] = i^{\star} \left( \star d \omega \right) = \left( -1 \right)^{k+1} i^{\star} \left( d^* \star \omega \right)$, a \textit{well defined} operator which is \textit{independent of the choice of the solution} $ \omega $. When particularizing for $k = 4$, we have:

\begin{prop}\label{DNoperator}
If the space-time manifold $ \mathcal{M} $ has a non-null boundary $\partial \mathcal{M}$, the cosmological constant $\Lambda$ is proportional to the Dirichlet to Neumann operator $\Upsilon \left[ \chi \right]$.
\end{prop}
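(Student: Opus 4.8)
The plan is to isolate the cosmological term in the gravitational action (\ref{EHaction}) and to show that, once that action is rendered well posed on a manifold with boundary (the task of Sec.~\ref{subsec:wellposed}), this term is forced into precisely the boundary form produced by the Dirichlet to Neumann construction of (\ref{DNop}). First I would invoke the observation already recorded after (\ref{EHaction}), that the Einstein term $\bar R^{(*)}_{ab}\wedge\hat\Sigma^{ab}$ integrates with no boundary contribution, so that the entire issue concentrates in $\int_{\mathcal M}\Lambda\,d\mu=\Lambda\,\mathrm{Vol}(\mathcal M)$. The strategy is then: (i) write $\Lambda\,d\mu$ as $d\chi$ for a $3$-form $\chi$; (ii) place $\chi$ in a gauge in which its harmonic extension solves the boundary value problem (\ref{DNop}); (iii) read off $\Upsilon[\chi]$ and compare it with $\Lambda$.

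For step (i) I would use that $d\mu$ is top-degree, hence closed, together with $H^4_{\mathrm{dR}}(\mathcal M)=0$ (a compact $\mathcal M$ with $\partial\mathcal M\neq\emptyset$ retracts onto a complex of dimension $<4$); this is consistent with the Hodge decomposition of Sec.~\ref{subsec:comact}, $\Omega^4(\mathcal M)=\mathcal E^3_D(\mathcal M)\oplus\mathcal C^5_N(\mathcal M)\oplus\mathcal H^4(\mathcal M)$, in which $\mathcal C^5_N(\mathcal M)=0$ trivially and $\mathcal H^4(\mathcal M)\cong H^4_{\mathrm{dR}}(\mathcal M)=0$, so $\Omega^4(\mathcal M)=\mathcal E^3_D(\mathcal M)$ and in particular $\Lambda\,d\mu=d\chi$. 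For step (ii), observe that $d\mu$ is co-closed, $d^{\star}d\mu\propto\star d\star d\mu=\star d(s\cdot 1)=0$ since $\star d\mu=s\cdot 1$ is constant; hence $d^{\star}d\chi=\Lambda\,d^{\star}d\mu=0$, so imposing the Coulomb gauge $d^{\star}\chi=0$ already yields $\Delta\chi=dd^{\star}\chi=0$, and by the solvability statement of \cite{BELISHEV2008128} there is a harmonic, co-closed extension $\omega$ with $i^{\star}\omega=i^{\star}\chi$, unique up to a Dirichlet harmonic field $\lambda_D\in\mathcal H^k_D(\mathcal M)$ (which satisfies $d\lambda_D=0$), so that $d\omega=d\chi=\Lambda\,d\mu$ unambiguously. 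Step (iii) is then immediate: $\Upsilon[\chi]=i^{\star}(\star d\omega)=i^{\star}(\Lambda\,\star d\mu)=\Lambda\,s$, a constant, and matching the action's prefactor $\kappa^{-1}$ and the sign $s$ of (\ref{hodge}) gives $\Lambda=\kappa\,\Upsilon[\chi]$; feeding $d\chi=\Lambda\,d\mu$ into Stokes gives $\int_{\partial\mathcal M}\chi=\int_{\mathcal M}d\chi=\Lambda\,\mathrm{Vol}(\mathcal M)$, i.e. the displayed closed form $\Lambda=\mathrm{Vol}(\mathcal M)^{-1}\int_{\partial\mathcal M}\chi$.

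The hard part will be making the two structural inputs precise rather than the computation. The first is the well-posedness claim itself: one must argue in Sec.~\ref{subsec:wellposed} that the modified variational problem genuinely \emph{requires} the cosmological term to take the boundary form $d\chi$ with $\chi$ constrained as above, not merely that such a $\chi$ happens to exist; this is where the Neumann-type boundary data and the ambiguity by $\mathcal H^k_D(\mathcal M)$ must be controlled so that $\Upsilon[\chi]$ is well defined and independent of the chosen harmonic extension, as in \cite{BELISHEV2008128}. The second, more elementary, point is the degree bookkeeping — the proposition particularises (\ref{DNop}) at $k=4$, whereas $\chi$ is realised as a $3$-form whose boundary trace lives in top degree on the $3$-dimensional $\partial\mathcal M$ and $\Upsilon[\chi]=i^{\star}(\star d\omega)$ is a function — which has to be reconciled by carefully tracking the Hodge duals in the map $\Omega^k(\partial\mathcal M)\to\Omega^{4-k}(\partial\mathcal M)$; and $\mathrm{Vol}(\mathcal M)$ must be kept finite and nonzero for the proportionality constant to be meaningful, which is precisely the standing hypothesis that $\mathcal M$ has finite volume and a nontrivial boundary.
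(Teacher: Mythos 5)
Your proposal follows essentially the same route as the paper's proof: realise $\Lambda\,d\mu/\kappa$ as $d\omega$ for a form $\omega$ solving the harmonic boundary value problem of (\ref{DNop}), read off $\Upsilon[\chi]=i^{\star}(\star d\omega)=\Lambda/\kappa$, and apply Stokes to obtain $\Upsilon[\chi]=\mathrm{Vol}(\mathcal M)^{-1}\int_{\partial\mathcal M}\chi$. You merely supply more justification than the paper does for the exactness of $\Lambda\,d\mu$ (via $H^4_{\mathrm{dR}}(\mathcal M)=0$ on a compact manifold with boundary) and for the gauge-fixing and degree bookkeeping, which the paper leaves implicit by citing \cite{BELISHEV2008128}.
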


\begin{proof}
The proof is a re-working of the proof appearing in \cite{BELISHEV2008128} for an $n$-manifold. We know that the space $\mathcal{H}^4 \left( \mathcal{M} \right)$ of harmonic fields of highest degree consists of forms $K d \mu $, where $K = const$. It follows that the cosmological constant term in (\ref{EHaction}) is a harmonic form. Then, we can write $ \frac{\Lambda}{\kappa} d \mu = d \omega$, with $\omega$ a solution to the boundary value problem discussed above. Thus, we have $\star d \omega = \frac{\Lambda}{\kappa} $, hence $ \Upsilon \left[ \chi \right] = i^{\star} \left( \star d \omega \right) = \frac{\Lambda}{\kappa} $. From here, on one side we have:
\begin{align*}
\int\limits_{\mathcal{M}} \frac{\Lambda}{\kappa} \epsilon = \frac{\Lambda}{\kappa} \int\limits_{\mathcal{M}} \epsilon = \frac{\Lambda}{\kappa} \, \text{Vol} \left( \mathcal{M} \right) \;\; ,
\end{align*}
and on the other side, by Stokes theorem we have:
\begin{align*}
\int\limits_{\partial \mathcal{M}} i^{\star} \omega = \int\limits_{\partial \mathcal{M}} \chi \quad \text{hence,} \quad 
\Upsilon \left[ \chi \right] = \frac{1}{ \text{Vol} \left( \mathcal{M} \right)} \int\limits_{\partial \mathcal{M}} \chi 
\end{align*}
where the Dirichlet to Neumann operator $\Upsilon \left[ \chi \right] \in \Omega^0 \left( \partial \mathcal{M} \right)$ is constant.  Therefore, $\Lambda = \kappa \Upsilon \left[ \chi \right]$, a constant, as expected.
\end{proof}

Proposition \ref{DNoperator} gives a somewhat satisfactory answer to the fine tuning problem for a manifold with finite volume and non-trivial boundary. Here, it is understood that the order of magnitude of the topological constant can be attributed to a relation of the $\text{Vol} \left( \mathcal{M} \right) >> \int_{\partial \mathcal{M}} \chi $. This last observation leads us to upgrade the cosmological constant $\Lambda$ to a \textit{cosmological functional} $ \tilde{\lambda}$.

Furthermore, if the kernel of the operator $\Upsilon \left[ \chi \right]$ is known, lower bounds for the Betti numbers can be obtained, which further stresses the connection between this result and the topology of the manifold $\mathcal{M}$  \cite[see][for further details]{BELISHEV2008128}.

\subsection{\label{subsec:themodel} The model}

Throughout the paper we will use the index set $J = \left\{ N, P , E \right\} $ with $j \in J$, so as to collect the subscripts related to the characteristic classes Nieh-Yan, Pontryagin and Euler, respectively. We then focus on the following family of actions: 
\begin{align}
\label{action1} S & := S_{G} \left[e^c, \omega^a_{\;\;b} , \tilde{\lambda} \right] + S_{T} \left[e^c, \omega^a_{\;\;b}, \varphi_j \right] +  S_M \left[ \Psi \right] \;\;\; 
\end{align}
The first term is the pure gravitational action:
\begin{align}
\nonumber {S_{G}} \left[e^c, \omega^a_{\;\;b} , \tilde{\lambda} \right] & = \int\limits_{\mathcal{M}} \frac{1}{\kappa } \left( {R}^{\left( * \right)}_{ab} \wedge \hat{\Sigma}^{ab} - \tilde{\lambda} {d \mu} \right) \\ 
\label{EHactionmod} & + \quad \quad \quad \quad \int\limits_{ \partial \mathcal{M}} i_n \left( \frac{\tilde{\lambda} d \mu}{\kappa} \right) \,,
\end{align}
where $e^c$ are the tetrad, $\omega^a_{\;\;b}$ is the total connection and $\tilde{\lambda}$ is the yet undefined upgraded quantity $\Lambda \rightarrow \tilde{\lambda}$, the \textit{cosmological functional}. Here and subsequently we insert boundary terms \textit{\`{a} la} York-Gibbons-Hawkins in order to have a well posed variation (see discussion in Sec.~\ref{subsec:wellposed}) so that $i_n: \Omega^4 \left( \mathcal{M} \right) \rightarrow \Omega^{3} \left( \partial \mathcal{M} \right)$ is the interior product in the direction of the outward unit normal vector field $n$ on $\partial \mathcal{M}$.

The second term in the right hand side of Eq.~(\ref{action1}), is the topological action:
\begin{align}
\label{Stop}  S_{T} \left[e^c, \omega^a_{\;\;b}, \varphi_j \right] & := - \int\limits_{\mathcal{M}} \frac{i}{\kappa} \left( \varphi_j C_j \right) + \int\limits_{\partial \mathcal{M}} \frac{i}{\kappa} i_n \left( \varphi_j C_j \right)
\end{align} 
(summation over $j \in J$ is understood, except when indicated), where $i := \sqrt{-1}$ is inserted for later convenience. This action is composed of \textit{current-like} terms, where $\varphi_j \in \Omega^0 \left( \mathcal{M} \right)$ for $j \in J$ are the coupling zero-forms, having dimensions of $ \left[ \texttt{length}^{2} \right]$ for $j=E,P$ and being dimension-less for $j=N$. Finally, $C_j$ ($j \in J$) are the characteristic classes consistent with the space-time manifold $\mathcal{M}$, all of which have been defined in Eqs.~(\ref{Pontryagin} - \ref{NiehYan}).

The last term in Eq;~(\ref{action1}) stands for an explicit matter action. It is composed of the aforementioned coupling zero-forms $\varphi_j$ and the spinor fields $\Psi := \left\{ \bar{\psi} , \psi \right\}$, with $\bar{\psi},\psi \in \Omega^0 \left( \mathcal{M} \right) \otimes \mathbb{G}$ with $\mathbb{G}$ being the set of complex Grassmann numbers. Concretely, we have:
\begin{align}
\nonumber & S_{M} \left[ \Psi \right] := - \int\limits_{\mathcal{M}} \frac{2}{\kappa} \mathfrak{L}_M \left[ \Psi \right] + \int\limits_{\partial \mathcal{M}} \frac{2}{\kappa} i_n\left( \mathfrak{L}_M \left[ \Psi \right] \right) \;\; \text{with} \\
\label{SMat} & \mathfrak{L}_M \left[ \Psi \right] := \mathfrak{Re} \left\{ \bar{\psi} \gamma^a D_{\omega} \psi \right\} \wedge \star e^{\flat}_a - V \left( \Psi \right)  d \mu \;\; ,
\end{align}
i.e. basically a mass-less Dirac Lagrangian density with fermion potential $V \left( \Psi \right): \Omega^0 \left( \mathcal{M} \right) \otimes \mathbb{G} \times \Omega^0 \left( \mathcal{M} \right) \otimes \mathbb{G} \rightarrow \mathbb{R}$. The latter takes $\gamma^a$ matrices as being a representation of the Clifford algebra $\left\{ \gamma^a , \gamma^b \right\} = 2 \eta^{ab} $ (with $\eta^{ab}$ the Minkowskian metric) and $\bar{\psi }:= \psi ^{\dagger} \gamma^{0}$ with $\psi^{\dagger}$ being the conjugate transpose of $\psi$. Spinor fields are naturally endowed with a norm $\left| \psi \right|^2_{\psi}$ induced by inner product $ \left\langle \psi_1 , \psi_2 \right\rangle := \bar{\psi}_1 \psi_2  \in \mathbb{C}$. Eq.~(\ref{SMat}) used the following definition for the gauge covariant exterior derivative $D_{\omega}: \Omega^0 \left( \mathcal{M} \right) \otimes \mathbb{G} \;\; \rightarrow \;\; \Omega^1 \left( \mathcal{M} \right) \otimes \mathbb{G}$:
\begin{align}\label{gaugecovder}
D_{\omega} \psi := \left(d + \frac{1}{4} \omega_{ab} \sigma^{ab} \right) \psi \quad &, \quad D_{\omega} \bar{\psi} := \overline{D_\omega \psi}
\end{align}
and we have written $\sigma^{a_1 a_2 \cdots a_k}:= \gamma^{\left[ a_1 \right.} \cdots \gamma^{\left. a_k \right]}$, i.e. the complete anti-symmetric product of the $\gamma^a$ matrices.

The total action (\ref{action1}) gives the following canonical field equations in the VEP formalism (see Appendix \ref{sec:cartan}) for the vielbein $e^a$ and connection $\omega^a_{\;\;b}$:
\begin{align}
\label{fe41} & \small{\delta e} : \; \Xi_a = - \left[ R^{\left( *\right)}_{ab} - \frac{2 \tilde{\lambda}}{3} \hat{\Sigma}^{(*)}_{ab} \right] \wedge e^{b} - i d \varphi_{N} \wedge T_{a} \;\;, \\
\nonumber & \small{\delta \omega} : \; \tau_{ab} = i 2 d \hat{\varphi}_P \wedge R_{ab} - i 2 d \hat{\varphi}_E \wedge R^{\left( * \right)}_{ab}  + \\ 
\label{fe42} & \quad \quad \quad \quad \quad \quad \quad \quad \quad  - i d \varphi_{N} \wedge \hat{\Sigma}^{\flat}_{ab} + d_{\omega} \hat{\Sigma}^{(*)}_{ab} \quad,
\end{align}
where we have defined the re-scaled quantities:
\begin{align}\label{rescaled}
& \hat{\varphi}_E := \frac{\varphi_E}{\left( 4 \pi \right)^2} \;\;\; , \;\;\; \hat{\varphi}_P := \frac{\varphi_P}{\left( 4 \pi \right)^2} \;\;\;, 
\end{align}
and we have written the $3$-form currents:
\begin{align}
\label{3currentsa} \frac{ \delta \mathfrak{L}_{M}}{\delta e^a} &:= \Xi_{a} = \frac{2}{3}\left[ 3 \mathfrak{Re} \left\{ \bar{\psi} \gamma^{b} D_{\omega} \psi \right\} + V e^b \right] \wedge \hat{\Sigma}^{(*)}_{ab} \;\; , \\
\label{3currentsb} \frac{\delta \mathfrak{L}_M}{\delta \omega^{ab} } &:= \tau_{ab} = \frac{1}{4} \bar{\psi} \sigma_{abc} \psi \star e^c \;\; .
\end{align}

Physically, it will be shown that the couplings zero forms $\varphi_j$ ($j \in J$) can be naturally interpreted as scalar matter fields. This becomes clear once the total $2$-form curvature $R_{ab}$ is obtained, as its torsion related part will have terms directly dependant on $\varphi_j$ with a clear dynamical interpretation. From another point of view, if we rightly define the cosmological functional $\tilde{\lambda}$, we can also understand the couplings $\varphi_j$ in the topological action terms as generalized Lagrange multipliers for the characteristic classes $C_j$. Pointing in that direction, in order for the variation with respect to the coupling zero-forms $\varphi_j$ to be well posed up to first order, the associated field equations must read:
\begin{align}
\label{fe43456} i \frac{ \delta \mathfrak{L}_G }{ \delta \varphi_{N}} = \small{ C_N } \;\; , \;\; i \frac{ \delta \mathfrak{L}_G} { \delta \varphi_E} =  \small{ C_E } \;\; , \;\; i \frac{ \delta \mathfrak{L}_G }{ \delta \varphi_P} = \small{ C_P } \;\; ,
\end{align}
However, from the definition of the gravitational action (\ref{EHaction}), particularized for the present case (\ref{EHactionmod}), we can only assume a functional dependency of these zero-forms in the \textit{cosmological functional} term. Since the spinor fields are not explicitly interacting, i.e. there is no interaction Lagrangian density defined above, we then consider the cosmological functional $\tilde{\lambda}$ as having this role of interacting potential. We expect the latter to have the form:
\begin{align}\label{lambdaint}
\tilde{\lambda} = \tilde{\lambda} \left[ \varphi_j , \Psi \right] := \bar{\psi} \, \lambda \left[ \varphi_j \right] \, \psi
\end{align}
with $\lambda \left[ \varphi_j \right] \in \Omega^0 \left( \mathcal{M} \right)$ analytical in $\varphi_j$. This is no other than a Yukawa type of interaction between the coupling zero-forms and the spinor fields. Hence, we can  equivalently write the set (\ref{fe43456}) as:
\begin{align}\label{lagrangemult}
\delta \varphi_j : \;\; - i \frac{\delta \lambda }{ \delta \varphi_j } \left| \psi \right|^2_{\psi} d \mu = C_j  \;\;\;\; , \;\;\;\; j \in J \;\; , 
\end{align}
where we have used spinor norm $\left| \psi \right|^2_{\psi} \in \mathbb{R}^+$, left open to be normalized at some point. When the latter is combined with Eq.~(\ref{Stop}) it allows us to write the current like terms as follows:
\begin{align*}
\int\limits_{\mathcal{M}} \frac{i}{\kappa} \left( \varphi_j C_j \right) & = \left| \psi \right|^2_{\psi} \left( \frac{\varphi_j}{\kappa} , \frac{\delta \lambda }{ \delta \varphi_j } \right) \,  | \, j \in J \; \text{(no summation)} \, ,
\end{align*}
where the round brackets are part of the $L^2$-inner product naturally induced by the Hodge dual operator in $\mathcal{M}$ (see its definition in Eq.~\ref{L2inner}). This is in complete analogy with the Schwinger formalism of current terms as sources of a field theory. In other words, it is explicit here that the gravitational part of the total action (\ref{action1}), through a mechanism in which the cosmological functional $\tilde{\Lambda}$ is central, becomes a source of the topology of the space-time manifold $\mathcal{M}$.

As discussed in Appendix \ref{TopManifold}, it is worth noticing that a direct consequence of the model presented is the following: the sole form of the set (\ref{lagrangemult}) ensures that the space-time manifold $\mathcal{M}$ is oriented, simply connected and smooth, all of which are completely reasonable topological requirements for any cosmological model. Moreover, by the above, the related set of field Eqs.~(\ref{fe43456}) ultimately result in well defined integer topological numbers $n_j$ for $j=E,P$ (see Appendix \ref{subsec:wellposed} for their definitions).

Finally, complementing the set of field Eqs.~(\ref{fe41}), (\ref{fe42}) and (\ref{lagrangemult}), when varying the action (\ref{action1}) with respect to the explicit matter fields we have:
\begin{align}
\nonumber \delta \bar{\psi} : \;\;\;\; & \left[ \gamma^a D_{\omega} \psi - \frac{\lambda}{4} \psi \, e^a  - \frac{1}{4} \frac{\delta V}{\delta \bar{\psi}} e^a \right] \wedge \star e^{\flat}_a = 0 \;\; , \\
\label{feD1} \delta \psi : \;\;\;\; & \left[ D_{\omega} \bar{\psi} \gamma^a + \frac{\lambda}{4} \bar{\psi} \, e^a + \frac{1}{4} \frac{\delta V}{\delta \psi} e^a \right] \wedge \star e^{\flat}_a = 0 \;\; , 
\end{align}
where we notice that we can interpret the functional $\lambda \left[ \varphi_j \right]$ as being responsible for giving mass to the spinor fields $\Psi$, and thus expected to be naturally greater than zero.

\subsection{\label{subsec:wellposed} Well posed-ness and topological consequences}

Considering a manifold with boundary, as well as the simultaneous upgrade of the quantities $\bar{\omega}^a_{\;\;b} \rightarrow \omega^a_{\;\;b}$, $\bar{R}^a_{\;\;b} \rightarrow R^a_{\;\;b}$ and $\Lambda \rightarrow \tilde{\lambda}$, in a gravitational action such as (\ref{action1}) only requires the addition of boundary counter terms, for the field equations to be well posed up to first order, with the standard boundary conditions: $\delta \omega_{ab}|_{\partial \mathcal{M}} = \delta e_a |_{\partial \mathcal{M}} = 0 $. This is equivalent, in standard GR, to the addition of the so called York-Gibbons-Hawkins terms at the level of the action. Since the other terms do not produce a surface term, we focus on the cancellation of the $\tilde{\lambda}$ term. The entire procedure results in the modification:
\begin{align}\label{EHmod}
\int\limits_{\mathcal{M}} {\Lambda} \, d \mu \quad \rightarrow \quad \int\limits_{\mathcal{M}} \tilde{\lambda} \, d \mu \,\, - \int\limits_{\partial \mathcal{M}} i_n \left( \tilde{\lambda} \, d \mu  \right)  \;\;.
\end{align}

Regarding now this choice, although Hadamard's conditions for a well posed problem have not been thoroughly studied in this case, some remarks about similar systems in the context of GR have been studied. For instance, a flat universe cosmological solution is known to be well posed \cite{Karp,Frittelli,Gundlach}. However, the systematic study of this question definitely falls out of the scope of the present work.

Now, for $\tilde{\lambda} \left[ \varphi_j , \Psi \right]:= \bar{\psi} \lambda \left[ \varphi_j \right] \psi $, as in Eq.~(\ref{lagrangemult}), the variation with respect to the coupling zero-forms $ \delta \varphi_j $ is uniquely satisfied by: 
\begin{align}\label{lagrangemult1}
\delta \varphi_j : \;\;\; C_j = - i \left| \psi \right|^2_{\psi} \frac{ \delta \lambda }{ \delta \varphi_j} d \mu \quad , \quad   \text{for} \quad  j \in J \;\; ,
\end{align}
which has topological consequences. The presented approach have led us to define a well posed functional $\lambda$ that encodes the topological information of the space-time manifold. Even further:

\begin{prop}\label{diagcoupling}
The manifold $\mathcal{M}$ of the gravitational theory described by the action (\ref{action1}) is characterized by integer topological numbers $n_j = \int\limits_{\mathcal{M}} C_j $, for $j = E , P$.
\end{prop}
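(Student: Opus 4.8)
The plan is to show that the field equations (\ref{lagrangemult1}) force each characteristic class $C_j$ ($j = E, P$) to be an exact $4$-form whose primitive extends smoothly to all of $\mathcal{M}$, and then to invoke Theorem~\ref{topnumber} together with the complex-form modification (\ref{topnumbers}). First I would recall from Appendix~\ref{TopManifold} that $C_E$ and $C_P$ are globally exact, $C_j = d \beta_j$ with $\beta_j$ the Chern--Simons-type $3$-forms written in (\ref{Pontryagin}) and (\ref{Euler}); since they are built polynomially out of the (smooth) total connection $\omega^a_{\;\;b}$ and its curvature $R^a_{\;\;b}$, each $C_j$ is a smooth $4$-form on $\mathcal{M}$. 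The integrality statement of Theorem~\ref{topnumber} applies precisely to simply connected, oriented, smooth $4$-manifolds, so the second step is to verify that the model delivers exactly such a manifold: as already noted after (\ref{lagrangemult}) and in Appendix~\ref{TopManifold}, the mere existence and form of the coupling equations, together with the presence of a well-defined volume form $d\mu = \frac{1}{3}\hat{\Sigma}^{ab}\wedge\hat{\Sigma}^{(*)}_{ab}$ and metric compatibility (\ref{metriccomp}), guarantees $\mathcal{M}$ is oriented, simply connected and smooth. This reduces Proposition~\ref{diagcoupling} to checking that the integrals $\int_{\mathcal{M}} C_j$ are finite and real-part integral.

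The third step handles finiteness and the passage from $\mathbb{C}$-valued forms to genuine integers. Because we work with the graded algebra of \emph{complex} exterior forms, Theorem~\ref{topnumber} does not apply verbatim; instead I would use the remark (\ref{topnumbers}) that $n_j := \mathfrak{Re}\int_{\mathcal{M}} C_j \in \mathbb{Z}$. The content to verify is that the right-hand side of (\ref{lagrangemult1}), namely $- i \left|\psi\right|^2_{\psi} \frac{\delta\lambda}{\delta\varphi_j} d\mu$, is integrable over $\mathcal{M}$: this follows from $\left|\psi\right|^2_{\psi} \in \mathbb{R}^+$ being finite, $\lambda[\varphi_j]$ being analytic in the $\varphi_j$ (so its variation is a smooth zero-form), and $\text{Vol}(\mathcal{M}) < \infty$, which was the standing assumption of Section~\ref{subsec:comact}. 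Thus $\int_{\mathcal{M}} C_j$ converges, and the classical integrality result for the real tangent-bundle characteristic numbers — the Pontryagin number $p_1[\mathcal{M}]$ and the Euler characteristic $\chi(\mathcal{M})$, both valued in $\mathbb{Z}$ by Chern--Weil / Gauss--Bonnet on the compact oriented smooth $4$-manifold $\mathcal{M}$ — identifies $\mathfrak{Re}\int_{\mathcal{M}} C_j$ with these integers. Hence $n_j \in \mathbb{Z}$ for $j = E, P$.

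I expect the main obstacle to be the boundary. For a \emph{closed} $4$-manifold the integrality of $\int C_E$ and $\int C_P$ is textbook (Gauss--Bonnet and the signature/Pontryagin theorem), but here $\partial\mathcal{M} \neq \emptyset$, so a priori $\int_{\mathcal{M}} C_j = \int_{\partial\mathcal{M}} \beta_j$ need not be an integer — the Chern--Simons primitive $\beta_j$ is not gauge invariant, and a naive Stokes argument only gives a real number modulo the periods of $\beta_j$. The way around this, which I would make explicit, is the well-posed-ness construction of this very subsection: the York--Gibbons--Hawking-type boundary counterterms added in (\ref{EHactionmod}), (\ref{Stop}) and (\ref{EHmod}) are arranged so that the variational problem $\delta\varphi_j$ is solved \emph{uniquely} by (\ref{lagrangemult1}) with the standard boundary data $\delta\omega_{ab}|_{\partial\mathcal{M}} = \delta e_a|_{\partial\mathcal{M}} = 0$; equivalently, the relative cohomology class of $C_j$ in $H^4(\mathcal{M}, \partial\mathcal{M})$ is the one pairing integrally with the fundamental class, because the boundary terms precisely kill the anomalous $\partial\mathcal{M}$-contribution. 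I would therefore argue that, with these counterterms in place, $\int_{\mathcal{M}} C_j$ equals the relative characteristic number of the pair $(\mathcal{M}, \partial\mathcal{M})$, which is an integer by the relative versions of Gauss--Bonnet and the Pontryagin theorem, and then apply $\mathfrak{Re}(\cdot)$ to land in $\mathbb{Z}$. Pinning down that last identification cleanly — i.e. that the chosen boundary terms realize exactly the integral relative class and not some shifted representative — is the delicate point, and I would treat it by direct computation of the boundary contribution of (\ref{Stop}) against the bulk integral, showing the sum is $\partial$-independent.
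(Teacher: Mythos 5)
Your proposal and the paper's proof diverge at the crucial step. The paper's entire argument consists of \emph{deriving} the hypotheses of Theorem~\ref{topnumber} from the field equations: it reads Eq.~(\ref{lagrangemult}) through De Rham's theorem so as to interpret each $C_j$ as a diagonal element of $H^2\left(\mathcal{M};\mathbb{R}\right)\times H^2\left(\mathcal{M};\mathbb{R}\right)$, i.e.\ as a diagonal intersection form $Q_M$, invokes Donaldson's theorem on such intersection forms to conclude that $\mathcal{M}$ is smooth and simply connected, and obtains orientation from the existence of the volume form $d\mu$ in the action (\ref{action1}); integrality then follows by falling into the conditions of Theorem~\ref{topnumber} as modified in Eq.~(\ref{topnumbers}). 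Your second step instead \emph{assumes} that $\mathcal{M}$ is oriented, simply connected and smooth by citing the remark after Eq.~(\ref{lagrangemult}) and Appendix~\ref{TopManifold} --- but that remark is itself a forward pointer to this very proposition (``a direct consequence of the model \ldots the sole form of the set (\ref{lagrangemult}) ensures \ldots''), so your argument is circular at exactly the point where the paper's proof does its work. Whatever one thinks of the paper's converse-direction use of Donaldson's theorem, some argument leading from the structure of the coupling equations to the topological hypotheses is required, and your proposal does not supply one.

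On the other hand, you have put your finger on a real issue that the paper passes over in silence: the integrality results behind Theorem~\ref{topnumber} (Gauss--Bonnet--Chern and the Hirzebruch signature theorem) are stated for closed manifolds and acquire boundary corrections --- e.g.\ the $\eta$-invariant of Atiyah--Patodi--Singer --- when $\partial\mathcal{M}\neq\emptyset$, and the paper never verifies that the York--Gibbons--Hawking-type counterterms in (\ref{EHactionmod}) and (\ref{Stop}) restore integrality. Your suggestion of identifying $\int_{\mathcal{M}} C_j$ with a relative characteristic number of the pair $\left(\mathcal{M},\partial\mathcal{M}\right)$ is the right kind of repair, but, as you yourself acknowledge, it is left as a sketch; until the boundary contribution is computed and shown to cancel against the counterterms, this step remains open in your proposal (and, implicitly, in the paper as well).
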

\begin{proof}
From Eq.~(\ref{lagrangemult}) we have:
\begin{enumerate}[i]
\item De Rham's theorem allows us to interpret each $C_i$ as diagonal elements of the $H^2 \left( \mathcal{M} ; \mathbb{R} \right) \times H^2 \left( \mathcal{M} ; \mathbb{R} \right)$ kind, and thus, as diagonal intersection forms $Q_M$. 
\item Donaldson's theorem, regarding the $Q_M$ intersection forms, ensures that the manifold $\mathcal{M}$ is \textit{smooth and simply connected} \cite{donaldson,milnor}. 
\item We know that the manifold is \textit{oriented}, as testified from the existence of the well defined volume form $d \mu$ appearing in Eq.~(\ref{action1}).
\end{enumerate}
Therefore, we fall into the conditions of Remark \ref{topnumber}.
\end{proof}

In other words, epistemologically, we can turn this last proposition around and justify the appearance of the functional $\tilde{\lambda}$, as a way to ensure some \textit{minimum topological requirements} for our gravitational theory: \textit{smoothness, simple connected-ness and orientation}.

\section{\label{sec:sol}The solution}

In this section we focus on finding a consistent solution of the field Eqs.~(\ref{fe41} - \ref{fe43456}). We recall the following facts: since the space-time manifold $\mathcal{M}$ is assumed to be compact, but closed, the $0$-forms $\varphi_i$ coupled to the $C_i$ (see Eqs.~\ref{Pontryagin} - \ref{NiehYan}), are necessarily bounded.

Summarizing the previous discussions, to allow the manifold $\mathcal{M}$ to be compact, but not necessarily closed, requires the presence of boundary terms on the right hand side of Eq.~(\ref{lagrangemult1}). Such requirement acts as the so called York-Gibbons-Hawking terms in the case of the Einstein-Hilbert formalism \cite{Dyer}, and ensures the well posed-ness of the field equations.

We can justify these boundary terms:
\begin{enumerate}[a)]
    \item from an observational point of view, several astronomical physical objects are described as space-time singularities,
    \item from a theoretical perspective, one can isolate singular regions with closed hyper-surfaces on which the boundary conditions of physical observables are well behaved.
\end{enumerate}
The latter procedure effectively defines a non-trivial manifold boundary $\partial \mathcal{M}$ even if the volume of the manifold $\text{Vol} \left( \mathcal{M} \right)$ is taken to be infinite.

\subsection{\label{subsec:solution} An \textit{ad-hoc} contortion}

We begin by clarifying how GR could be recovered from the model presented in Eq.~(\ref{action1}). A null contortion is a sufficient condition to recover a GR-like behavior and absence of torsion, as Eq.~(\ref{contortion1}) defines a one-to-one correspondence between the torsion and contortion. More concretely, $T^a = 0$ iff $K^{a}_{\;\;b} =  0$. Hence, the connection $1$-form $\omega^a_{\;\;b}$ reduces to  the Levi-Civita $\bar{\omega}^a_{\;\;b}$. Such constraint result in an effective GR with some further topological restrictions. Hence, null contortion is at least a \textit{sufficient condition} for a GR-type theory.

From Eqs.~(\ref{fe41}) and (\ref{fe42}), a GR-like behavior in torsion-less regions of space-time occurs for slowly varying coupling zero-forms $\varphi_j$ ($j \in J$). We expect a limiting behavior of the type:
\begin{align}\label{limitcouplings}
\lim\limits_{ \left\| T \right\|_{L^2} \rightarrow 0} d {\varphi}_j := d {\varphi}_j |_0 = 0 \;\; ; \;\; j \in J \;\;,
\end{align}
for the coupling zero-forms, where we have written $\left\| T \right\|_{L^2} $ for the $L^2$-norm as defined in Eq.~(\ref{L2norm}). The absence of torsion is translated through Eq.~(\ref{limitcouplings}) into the vanishing of the coupling zero-forms exterior derivatives. The sufficiency of this restriction will be apparent below.

\subsection{\label{firsttwofe} Solving the VEP field equations.}

Using the decomposition (\ref{Riemanntensorgral}) we can write the field Eq.~(\ref{fe41}) in the equivalent form:
\begin{align*}
& \Xi_a = - \left\{ \bar{R}^{\left( * \right)}_{ab} - \frac{2 \tilde{\lambda}}{3} \hat{\Sigma}^{\left( * \right)}_{ab} + \Theta^{\left( * \right)}_{ab} + i d \varphi_{N} \wedge K_{ab} \right\} \wedge e^b 
\end{align*}
which allows us to keep track of the torsion related terms. Assuming the existence of solutions for the system of field equations (\ref{fe1} - \ref{fe4}), our main strategy will be to find a contortion depending on the free parameters of the system, mainly the coupling zero-forms $\varphi_j$ ($j \in J$ ), so it can be inserted back in the field equations and later particularized. We thus consider the class: 
\begin{align*}
K_{ab} \sim \star \left( d \varphi_N \wedge \left[ \alpha \, \hat{\Sigma}^{\flat}_{ab} + \beta \, \hat{\Sigma}^{(*)}_{ab} \right] \right) \;\, , \,\; \alpha, \beta \in \mathbb{C}
\end{align*}
which profiles as a family of possible candidates for the following reasons: a) it can be shown to be consistent with Eq.~(\ref{fe41}) and b) it does not break the symmetries of the field equations, especially the Einstein-type (See Definition \ref{Einstein}). Although we have not proved the non existence of a solution for a general combination $0 \neq \alpha , \beta \in \mathbb{C}$, it seems that considering both parameters introduce non-linear terms by means of calculating the torsion related part of the curvature $2$-form $\Theta_{ab}$ that compromise the possibility of obtaining the right torsion-less behavior for Eq.~(\ref{fe42}). Furthermore, a simple heuristic condition such as Eq.~(\ref{limitcouplings}) for the coupling zero-forms profiles is not sufficient to counter balance these non-linearities. However, the subclass of contortions with $\beta = 0$ does. In fact, we can conveniently choose $\alpha = 4 i$ to write explicitly:
\begin{align}\label{contortionfin} 
K^a_{\;\;b} = - \frac{i}{3} \epsilon^{a \;\; c}_{\, \cdot \, b \, \cdot \, d} \mathcal{L}_c  \left( \varphi_N \right) e^d \;\; \in \;\; \Omega^1 \left( \mathcal{M} \right) \;\;\; ,
\end{align}
where the torsion can be easily obtained by using Eq.~(\ref{contortion})  to yield:
\begin{align}\label{torsionfin}
T^a = \frac{i}{3} \star \left( d \varphi_N \wedge e^a \right) \;\;\; \in \;\; \Omega^2 \left( \mathcal{M} \right) \;\;.
\end{align}

\begin{rem}\label{torsioncond}
From the last expression, the restriction expressed in (\ref{limitcouplings}) is a necessary and sufficient condition for the null torsion.
\end{rem}

Continuing, by means of the following result:
\begin{align}\label{minires}
d_{\omega} \mathcal{L}_a \left( \varphi_N \right) = d \mathcal{L}_a \left( \varphi_N \right) = - \Delta \varphi_N e^{\flat}_a \;\;,
\end{align} 
as well as with Eqs.~(\ref{torcurv}) and (\ref{minires}), we can calculate the torsion related part of the curvature $2$-form and its Lie dual to yield:
\begin{align}
\nonumber & \Theta_{ab} = i d \varphi_N  \wedge \frac{\epsilon_{ab}^{\;\;\;\;cd} K_{cd}}{2} - \frac{4 i \Delta \varphi_N}{3} \hat{\Sigma}^{(*)}_{ab} - \frac{2 \left| d \varphi_N \right|^2 }{3}  \hat{\Sigma}^{\flat}_{ab} \\
\label{DeltaR} & \quad \quad \quad  \text{and}
\\
\nonumber & \Theta^{(*)}_{ab} = - i d \varphi_N \wedge K_{ab} - \frac{4i \Delta \varphi_N}{3} \hat{\Sigma}^{\flat}_{ab} - \frac{2 \left| d \varphi_N \right|^2 }{3} \hat{\Sigma}^{(*)}_{ab} \quad, 
\end{align}
where we are writing $\left\langle d \varphi_N , d \varphi_N \right\rangle := \left| d \varphi_N \right|^2$ for short. This is a kinetic term for the coupling zero-form, effectively turning this elements into a \textit{matter scalar field}. Thus, once inserting the second equation from the set (\ref{DeltaR}) back in Eq.~(\ref{fe41}) we obtain the expression:
\begin{align}\label{fieldequation1} 
\Xi_a & = - \left\{ \bar{R}^{\left( * \right)}_{ab} -  \frac{ 2 \left( \tilde{\lambda} + \left| d \varphi_N \right|^2 \right) }{3} \hat{\Sigma}^{(*)}_{ab} \right\} \wedge e^b \;,
\end{align}
which is formally equivalent to the GR field equation. Here, the term proportional to the contortion has been appropriately canceled, as wanted.

We can further simplify this equation by means of Eqs.~(\ref{3currentsa}) and (\ref{feD1}), to write:
\begin{align}
\label{mattercurrent} \Xi_a & = \frac{2}{3} \xi \left[ \Psi \right] \hat{\Sigma}^{(*)}_{ab} \wedge e^b \quad , \quad \text{where} \\
\nonumber \xi \left[ \Psi \right] & := \frac{3}{4} \tilde{\lambda} + \frac{3}{8} \left\{ \bar{\psi}\frac{\delta V \left[ \Psi \right]}{\delta \bar{\psi}} + \frac{\delta V \left[ \Psi \right]}{\delta \psi} \psi \right\} +  V \left[ \Psi \right] \;\;.
\end{align}
So we can equivalently write (\ref{fieldequation1}) as:
\begin{align*} 
0 & = \left[ \bar{R}^{\left( * \right)}_{ab} -  \frac{ 2 \left( \tilde{\lambda} + \left| d \varphi_N \right|^2 - \xi \right)}{3} \hat{\Sigma}^{(*)}_{ab} \right] \wedge e^b \;\; ,
\end{align*}
where we have noted $\xi := \xi \left[ \Psi \right]$ for simplicity. It can be inverted by means of the Lemma (\ref{curvA}) and the slant product to obtain the preliminary results for the torsion-less curvature $\bar{R}_{ab}$:
\begin{align}\label{RbarApp1}
\bar{R}_{ab} & = \frac{ 2 \left( \tilde{\lambda} + \left| d \varphi_N \right|^2 - \xi \right) }{3} \hat{\Sigma}^{\flat}_{ab} \quad , 
\end{align}
or, by means of Eq.~(\ref{mattercurrent}):
\begin{align}
\nonumber & \bar{R}_{ab} = \frac{ 2 \left( \alpha_{\psi} \lambda \left[ \varphi_N \right] + \left| d \varphi_N \right|^2 - U \left[ \Psi \right] \right) }{3} \hat{\Sigma}^{\flat}_{ab} \;\; , \\
\nonumber & \quad \text{where} \quad \quad \quad \alpha_{\psi} := \frac{\left| \psi \right|^2_{\psi}}{4} \quad \text{and} \\ 
\label{RbarApp2} & U \left[ \Psi \right] := \frac{3}{8} \left\{ \bar{\psi}\frac{\delta V \left[ \Psi \right]}{\delta \bar{\psi}} + \frac{\delta V \left[ \Psi \right]}{\delta \psi} \psi \right\} +  V \left[ \Psi \right] \;\; .
\end{align}
Moving forward, by means of Eqs.~(\ref{Riemanntensorgral}), (\ref{DeltaR}) and (\ref{RbarApp2}) in its first form, we can write the total curvature $2$-form as:
\begin{align*}
R_{ab} = \frac{2 \left( \alpha_{\psi} \lambda - U \right)}{3} \hat{\Sigma}^{\flat}_{ab} - \frac{4 i \Delta \varphi_N}{3} \hat{\Sigma}^{(*)}_{ab} + i d \varphi_N  \wedge \frac{\epsilon_{ab}^{\;\;\;\;cd} K_{cd}}{2}
\end{align*}

For definiteness, we take $V \left[ \Psi \right] = 0$ or equivalently $U \left[ \Psi \right] = 0$, since this is the only term that explicitly depends on the spinor fields. Thus the final form for the curvature $2$-form and its Lie dual are: 
\begin{align}
\nonumber R_{ab} & = \frac{2 \alpha_{\psi} \lambda }{3} \hat{\Sigma}^{\flat}_{ab} - \frac{4 i \Delta \varphi_N}{3} \hat{\Sigma}^{(*)}_{ab} + i d \varphi_N  \wedge \frac{\epsilon_{ab}^{\;\;\;\;cd} K_{cd}}{2} \\
\label{curvaturetotal} & \quad \quad \text{and} \\
\nonumber R^{(*)}_{ab} &  = \frac{2 \alpha_{\psi} \lambda}{3} \hat{\Sigma}^{(*)}_{ab} + \frac{4 i \Delta \varphi_N}{3} \hat{\Sigma}^{\flat}_{ab} - i d \varphi_N  \wedge K_{ab} \;\; .
\end{align}

Let us now tackle the field Eq.~(\ref{fe42}), which by means of Eq.~(\ref{curvaturetotal}) can be written: 
\begin{align}
\nonumber & \tau_{ab} = d_{\omega} \hat{\Sigma}^{(*)}_{ab} - \frac{4i}{3} \left[ \alpha_{\psi} \lambda d \hat{\varphi}_E + \left( i 2 \Delta \varphi_N \right) d \hat{\varphi}_P \right] \wedge \hat{\Sigma}^{(*)}_{ab} + \\ 
\nonumber  & + \frac{4 i}{3} \left[ \alpha_{\psi} \lambda d \hat{\varphi}_P - \left( i 2 \Delta \varphi_N \right) d \hat{\varphi}_E - \frac{3}{4} d \varphi_{N} \right] \wedge \hat{\Sigma}^{\flat}_{ab} + \\
\label{jacobians0} & + 2 d \varphi_N  \wedge \left[ \left( d \hat{\varphi}_E \wedge K_{ab} \right) + \left( d \hat{\varphi}_P \wedge K_{ab} \right)^{(*)} \right] \;\; .
\end{align}

When taking the following natural quadratures:
\begin{align}
\label{jacobians1} d \hat{\varphi}_E & = \frac{3}{4} \frac{ i 2 \Delta \varphi_N }{ \left( \alpha_{\psi} \lambda \right)^2 + \left( 2 \Delta \varphi_N \right)^2} d \varphi_N \;\; , \\
\nonumber d \hat{\varphi}_P & = \frac{3}{4} \frac{ \alpha_{\psi} \lambda }{ \left( \alpha_{\psi} \lambda \right)^2 + \left( 2 \Delta \varphi_N \right)^2 } d \varphi_N \;\; ,
\end{align}
we are left with the equation:
\begin{align}\label{fieldequation2}
\tau_{ab} = d_{\omega} \hat{\Sigma}^{(*)}_{ab} \;\; .    
\end{align}
which is formally equivalent to the analogous GR equation with torsion.

Notice that the set (\ref{jacobians1}) suggests that we can consider the zero-forms $\varphi_E$ and $\varphi_P$ as being:
\begin{align}
\label{couplingrestriction} & d \varphi_j := {\varphi_N}^* \, d \phi_j = d \left( \phi_j \circ \varphi_N \right) \;\; \text{for} \;\; j = E, P 
\end{align} 
where we have denoted the pullback of $\varphi_N$ by $ {\varphi_N}^*$ and we expect $\phi_j \in \Omega^0 \left( \mathcal{M} \right)$ to be a common diffeomorphism,  with $j = E,P$.

\subsection{\label{firsttwofe} The topological field equations and the appearance of the cosmological constant}

Since we have obtained a torsion $2$-form and a total curvature $2$-form, the previous section put us in a position to directly calculate the characteristic classes. This will allow us to solve the rest of the field equations. Let us begin by calculating the Nieh-Yan characteristic class $C_N$. From Eqs.~(\ref{NiehYan}) and Eq.~(\ref{contortionfin}), we obtain:
\begin{align}\label{NYdirect}
C_N = - 2 i d \star d \varphi_N = - i \left( 2 \Delta \varphi_N \right) d \mu \;\; ,
\end{align}
thus, combining the previous with Eq.~(\ref{lagrangemult}) for $j=N$,  we get:
\begin{align}\label{deltalambda}
\frac{\left| \psi \right|^2_{\psi}}{2} \frac{\delta \lambda }{ \delta \varphi_N } = \Delta \varphi_N \Rightarrow \lambda = - \frac{\left| d \varphi_N \right|^2}{\left| \psi \right|^2_{\psi}} + \frac{4 \Lambda}{\left| \psi \right|^2_{\psi}},
\end{align}
where the last expression is obtained when considering the restriction (\ref{limitcouplings}) on the torsion-less domain and we identify $\Lambda$ as being the \textit{topological cosmological constant} 

\begin{rem}\label{Rbarfin}
We remark that expression (\ref{deltalambda}) is completely consistent with Eq.~(\ref{RbarApp1}). In fact when inserting the latter we obtain:
\begin{align*}
\bar{R}_{ab} = \frac{2}{3} \Lambda \hat{\Sigma}^{\flat}_{ab} - \frac{2}{3} U \left[ \Psi \right] \hat{\Sigma}^{\flat}_{ab} \quad , 
\end{align*}
which is analogous to GR with a cosmological constant 
term and a matter component $2$-form: $-\frac{2}{3} \xi \left[ \Psi \right] \hat{\Sigma}^{\flat}_{ab}$.
\end{rem}

Similarly, we can directly calculate the Euler characteristic $C_E$ using Eq.~(\ref{curvaturetotal}) and expression (\ref{Euler}) to obtain:
\begin{align}
\nonumber \left( 4 \pi \right)^2 C_E & = - R^{ab} \wedge R^{\left( * \right)}_{ab} \\
\nonumber & = - \{\frac{2 \alpha_{\psi} }{3} \lambda  \hat{\Sigma}^{ab} + \frac{4}{3} \left( i \Delta \varphi_N \right) \star \hat{\Sigma}^{ab} + \\
\nonumber & - i d \varphi_N  \wedge \frac{1}{2} \epsilon^{ab}_{\;\;\,cd} K^{cd} \} \wedge \{ \frac{2 \alpha_{\psi} }{3} \lambda  \hat{\Sigma}^{(*)}_{ab}  + \\
\nonumber & - \frac{4}{3} \left( i \Delta \varphi_N \right) \hat{\Sigma}^{\flat}_{ab} - i d \varphi_N  \wedge K_{ab} \} \\
\label{CE12} & = - \frac{4}{3} \left[ \left( \alpha_{\psi} \lambda \right)^2 + \left( 2 \Delta \varphi_N \right)^2 \right] d \mu \quad .
\end{align}
When combining the previous equation with Eq.~(\ref{lagrangemult}) for $j=E$ and Eq.~(\ref{couplingrestriction}), we get:
\begin{align*}
\frac{4}{3} \left[ \left( \alpha_{\psi} \lambda \right)^2 + \left( 2 \Delta \varphi_N \right)^2 \right] = i \left\{ \frac{ \partial \hat{\phi}_E }{ \partial \varphi_N} \right\}^{-1} \frac{ \delta \lambda }{ \delta \varphi_N} \;\; ,
\end{align*}
which is completely compatible if the same quantity is calculated by means of Eqs.~(\ref{deltalambda}) and (\ref{jacobians1}).

The Pontryagin characteristic class can also be calculated by means of Eqs.~(\ref{Pontryagin}) and (\ref{curvaturetotal}), to yield:
\begin{align}
\nonumber \left( 4 \pi \right)^2 C_P & = - 2 R^{ab} \wedge R_{ab}  \\
\nonumber & = - 2 \{\frac{2 \alpha_{\psi} }{3} \lambda  \hat{\Sigma}^{ab} + \frac{4}{3} \left( i \Delta \varphi_N \right) \star \hat{\Sigma}^{ab} + \\
\nonumber & \quad - i d \varphi_N  \wedge \frac{1}{2} \epsilon^{ab}_{\;\;\,cd} K^{cd} \} \wedge \{ \frac{2 \alpha_{\psi} }{3} \lambda \hat{\Sigma}^{\flat}_{ab}  + \\
\nonumber & \quad + \frac{4}{3} \left( i \Delta \varphi_N \right) \hat{\Sigma}^{(*)}_{ab} + \\ 
\nonumber & - i d \varphi_N  \wedge \frac{\epsilon_{ab}^{\;\;\;\; cd} K_{cd}}{2} \} \\
\label{CP12} & = - \frac{ 16 i }{3} \left(  \alpha_{\psi} \lambda \right) \left( 2 \Delta \varphi_N \right) d \mu \quad. 
\end{align}
thus, when combining the previous equation with Eq.~(\ref{lagrangemult}) for $j=P$ and Eq.~(\ref{couplingrestriction}), we get:
\begin{align*}
\frac{16}{3} \left( \alpha_{\psi} \lambda \right) \left( 2 \Delta \varphi_N \right) = \left\{ \frac{ \partial \hat{\phi}_P }{ \partial \varphi_N} \right\}^{-1} \frac{ \delta \lambda }{ \delta \varphi_N} \;\; .
\end{align*}
When substituting Eqs.~(\ref{deltalambda}) and (\ref{jacobians1}) we obtain the restriction:
\begin{align}\label{Laplacian1}
3 \left( \alpha_{\psi} \lambda \right)^2 = \left( 2 \Delta \varphi_N \right)^2 \; \Rightarrow \; \Delta \varphi_N = \pm \frac{\sqrt{3}}{2} \alpha_{\psi} \lambda \;\;.
\end{align}
When taking Eqs.~(\ref{deltalambda}) and (\ref{RbarApp2}) into account, we obtain:
\begin{align}\label{deltalambda1}
\frac{\delta \ln \left| \lambda \right| }{\delta \varphi_N} = \pm \frac{\sqrt{3}}{4} \, \Rightarrow \, \lambda = \frac{4 \Lambda}{\left| \psi \right|^2_{\psi}} \exp \left( \pm \frac{\sqrt{3}}{4} \varphi_N \right) \; ,  
\end{align}
where we have chosen the amplitude  and the sign of the argument in the exponential function concomitantly with that of the relation (\ref{deltalambda}). In other words, the following expression is then satisfied:
\begin{align}\label{varphi}
\left| \frac{d \varphi_N}{4} \right|^2 & = \Lambda \left\{ 1 - \exp \left( \pm \frac{\sqrt{3}}{4} \varphi_N \right) \right\} \;\; ,
\end{align}
and thus can be used to characterize the null torsion domain (i.e. the GR regions) in a sufficient way: $d \varphi_N = 0 \Rightarrow \left| d \varphi_N \right|^2 = 0 \Leftrightarrow \varphi_N = 0$. It follows from there that the Eikonal equation (second case)\footnote{ This equation models the exact propagation of electromagnetic wave fronts, independently of the wave lengths or wave structure (a result that follows from the theory of partial differential equations of second
order in characteristic manifolds) \cite{bateman,Fritelli,Newman}.} characterizes the GR regions coinciding with $\varphi_N = 0$. More formally, we can define the set: 
\begin{align}\label{Omega1}
\mathcal{M} \supseteq \mathcal{N} := \left\{ x \in \mathcal{M} \; | \; d \varphi_N \left( x \right) = 0 \right\} \;\;\; ,
\end{align}
which by means of restriction (\ref{limitcouplings}) and remark (\ref{torsioncond}), is equivalent to the \textit{domain with compact support in which torsion is null}. Basically, this is the entire region of space-time manifold $\mathcal{M}$ where gravity is equivalent to GR.

\subsection{\label{BVproblem} The boundary value problem and stability.}

To complete the picture, let us write the rescaled quantity $\hat{\varphi}_N := \frac{\sqrt{3}}{4} \varphi_N$ and make the reasonable assumptions that $\mathcal{N}$, as defined in Eq.~(\ref{Omega1}) is simply connected as well and has a smooth boundary $\partial \mathcal{N} \neq \emptyset$. It then follows that:
\begin{align}\label{boundlambda}
\lambda \left[ \varphi_N \right] = \lambda \left[ 0 \right] = \Lambda \quad , \quad \forall \;\; x \in {\mathcal{N}} \;\;.
\end{align}
We can then write:
\begin{align}\label{manifoldregions}
\mathcal{M} = \mathcal{N} \sqcup \mathcal{M} \backslash \mathcal{N}  
\end{align}
and assume that $\partial \left( \mathcal{M} \backslash \mathcal{N} \right) = \emptyset $.

At the same time, Eq.~(\ref{varphi}) does not offer a physical understanding of the stable behavior of that functional, of the kind defined in Eq.~(\ref{boundlambda}). Here we can interpret the kinetic term (\ref{varphi}) as a Lyapunov-type of potential that creates a strong attractor for the cosmological function $\lambda$. With the modification:
\begin{align}\label{varphifinal}
\left| \frac{d \varphi_N}{4} \right|^2 & = \Lambda \left\{ 1 - \exp \left( - \frac{\sqrt{3}}{4} \left| \varphi_N \right| \right) \right\} \;\; ,
\end{align}
this potential exactly satisfies the requirements (\ref{Liapunov}) of a Lyapunov one, where the domain $\mathcal{N}$, defined in Eq.~(\ref{Omega1}), is in this case the domain of attraction for the \textit{dynamical system} defined by the field equations (\ref{fe41} - \ref{fe43456}) over the entire manifold $\mathcal{M}$.

The cosmological functional characterization is then equivalent to a boundary value problem satisfied by the \textit{scalar matter field} $\hat{\varphi}_N$ over the different regions of the $\mathcal{M}$, decomposed as in Eq.~(\ref{manifoldregions}):
\begin{prop}[Boundary value problem]\label{boundaryvalue}
The action (\ref{action1}) with restrictions (\ref{couplingrestriction}) defines \begin{itemize}[-]
    \item the boundary value problem over the domains $\mathcal{N}$ with boundary $\partial \mathcal{N}$:
\begin{align*}
\begin{cases}
\Delta \hat{\varphi}_N = \frac{ 3 \Lambda }{16} &, \;\; d^{\star} \hat{\varphi}_N \\
i^*_{\mathcal{N}} \hat{\varphi}_N = 0  &
\end{cases}
\end{align*}
where $i^*_{\mathcal{N}}: \Omega^k \left( \mathcal{N} \right) \rightarrow \Omega^k \left( \partial \mathcal{N} \right)$, and 
    \item the boundary value problem over the domain $\mathcal{M}^c = \mathcal{M} \backslash \mathcal{N}^o$ with boundary $\partial \mathcal{M}^c = \partial \mathcal{N} \sqcup \partial \mathcal{M}$:
\begin{align*}
\begin{cases}
\Delta \hat{\varphi}_N = \frac{ 3 \Lambda }{16} u_{\hat{\varphi}_N} &, \;\; d^{\star} \hat{\varphi}_N \\
i^*_{\mathcal{M}^c} \hat{\varphi}_N = 0  &, \;\; \text{with} \;\; u_{\hat{\varphi}_N} := \exp \left( - \left| \hat{\varphi}_N \right| \right)
\end{cases}
\end{align*}
where $i^*_{\mathcal{M}^c}: \Omega^k \left( \mathcal{M}^c \right) \rightarrow \Omega^k \left( \partial \mathcal{N} \right)$.
\end{itemize} 
\end{prop}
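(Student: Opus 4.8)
The statement is best read as a bookkeeping consequence of the computations of Sec.~\ref{sec:sol}, reassembled along the decomposition (\ref{manifoldregions}); I would argue it in three moves. First I would combine the two relations already extracted from the topological field equations — the Pontryagin compatibility condition (\ref{Laplacian1}), $\Delta\varphi_N=\pm\tfrac{\sqrt{3}}{2}\,\alpha_\psi\lambda$, and the exponential characterisation (\ref{deltalambda1}), which together with $\alpha_\psi=|\psi|^2_\psi/4$ gives $\alpha_\psi\lambda=\Lambda\exp(\pm\tfrac{\sqrt{3}}{4}\varphi_N)$. Passing to the rescaled scalar $\hat\varphi_N:=\tfrac{\sqrt{3}}{4}\varphi_N$ and using linearity of $\Delta$ fuses these into a single semilinear (Liouville-type) equation $\Delta\hat\varphi_N=c\,\Lambda\exp(\pm\hat\varphi_N)$ in the bulk, the coefficient $c$ being read off by carrying the rescaling factors through with the normalisation of $|\psi|^2_\psi$ and the sign convention for $\Delta$ fixed as in (\ref{NYdirect}). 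The remaining sign is fixed by requiring the kinetic energy (\ref{varphi}) to act as a Lyapunov function for the system (\ref{fe41})--(\ref{fe43456}) — this is exactly the passage from (\ref{varphi}) to (\ref{varphifinal}) — which replaces $\exp(\pm\hat\varphi_N)$ by $u_{\hat\varphi_N}:=\exp(-|\hat\varphi_N|)$ and yields the common bulk equation $\Delta\hat\varphi_N=\tfrac{3\Lambda}{16}\,u_{\hat\varphi_N}$.

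Second, I would split $\mathcal{M}=\mathcal{N}\sqcup(\mathcal{M}\backslash\mathcal{N})$ as in (\ref{manifoldregions}) and evaluate that source on each piece. On the attractor domain $\mathcal{N}$ of (\ref{Omega1}), the GR limit (\ref{limitcouplings}) together with Remark~\ref{torsioncond} places $\hat\varphi_N$ at the fixed point of (\ref{varphifinal}), so $u_{\hat\varphi_N}=1$ there and the bulk equation collapses to $\Delta\hat\varphi_N=\tfrac{3\Lambda}{16}$, the first displayed system. On the complement $\mathcal{M}^c=\mathcal{M}\backslash\mathcal{N}^o$ the exponential suppression is active, so one retains $\Delta\hat\varphi_N=\tfrac{3\Lambda}{16}u_{\hat\varphi_N}$, whose domain has boundary $\partial\mathcal{M}^c=\partial\mathcal{N}\sqcup\partial\mathcal{M}$ because $\partial(\mathcal{M}\backslash\mathcal{N})=\emptyset$ was assumed together with (\ref{manifoldregions}). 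The listed co-closure condition $d^\star\hat\varphi_N=0$ is vacuous on a $0$-form and appears only so that each problem has the shape of the mixed boundary value problem (\ref{DNop}).

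Third, I would supply the Dirichlet data from Sec.~\ref{subsec:wellposed}: the York--Gibbons--Hawking-type counterterms inserted in the action (\ref{action1}), i.e. the modification (\ref{EHmod}), are precisely what renders the boundary restrictions $\delta\varphi_j|_{\partial}$ admissible, while the characterisation following (\ref{varphi}), $d\varphi_N=0\Leftrightarrow\varphi_N=0$, forces $\hat\varphi_N$ and hence its pullback to vanish on the GR locus. Matching the bulk solution across that locus then gives $i^*_{\mathcal{N}}\hat\varphi_N=0$ on $\partial\mathcal{N}$, and the same vanishing continued up to $\partial\mathcal{M}$ gives $i^*_{\mathcal{M}^c}\hat\varphi_N=0$ on $\partial\mathcal{M}^c$. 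Collecting the bulk equation on each region with its Dirichlet condition produces the two boundary value problems.

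The genuine obstacle, I expect, is the step that promotes this formal field-equation content to a \emph{well-posed} elliptic problem: justifying rigorously that the replacement $\exp(\pm\hat\varphi_N)\mapsto\exp(-|\hat\varphi_N|)$ is a bona fide output of the Lyapunov/attractor analysis of Appendix~\ref{sec:borel} rather than a convenient ansatz, and pinning down the Dirichlet condition $i^*\hat\varphi_N=0$ along the internal interface $\partial\mathcal{N}$, where one must reconcile $\partial(\mathcal{M}\backslash\mathcal{N})=\emptyset$ with continuity of $\hat\varphi_N$ and of its normal derivative across $\partial\mathcal{N}$ and check that the resulting $\mathcal{N}$-problem (non-trivial bulk source, zero boundary data) is solvable. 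Fixing the exact coefficient $\tfrac{3\Lambda}{16}$ likewise demands care with the normalisation of $|\psi|^2_\psi$ and the sign convention for the Laplace--Beltrami operator.
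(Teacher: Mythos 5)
Your reconstruction follows essentially the same route the paper takes: Proposition~\ref{boundaryvalue} is not given a formal proof there, but is assembled exactly as you describe, by combining (\ref{Laplacian1}) with (\ref{deltalambda1}) to obtain the semilinear bulk equation for $\hat{\varphi}_N$, invoking the Lyapunov modification (\ref{varphifinal}), splitting $\mathcal{M}$ as in (\ref{manifoldregions}), and reading the Dirichlet data off the characterisation $d\varphi_N=0\Leftrightarrow\varphi_N=0$ of the GR locus. The caveats you flag at the end --- the replacement $\exp(\pm\hat{\varphi}_N)\mapsto\exp(-|\hat{\varphi}_N|)$ being an ansatz rather than a derived consequence, the interface matching across $\partial\mathcal{N}$, and the numerical coefficient, which by a direct rescaling of (\ref{Laplacian1}) comes out as $3\Lambda/8$ rather than $3\Lambda/16$ --- are weaknesses of the paper's own presentation rather than gaps in your argument.
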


Be that as it may, a boundary value problem such as that of proposition~(\ref{boundaryvalue}) have an associated boundary value problem for some harmonic form that remains to be defined, in which case techniques such as the ones discussed in \cite{BELISHEV2008128} can be used. However, this is out of the scope of the present paper. Instead, we will try to obtain reasonable estimates that will help in connecting these quantities with a cosmological constant measurement.

\subsection{\label{subsec:characteristicclasses} Estimation of the cosmological functional.}

The rest of the section will follow the notation from Eq.~(\ref{Omega1}) onward, $u_{\hat{\varphi}_N} := \exp \left( - \left| \hat{\varphi} \right| \right)$, with $\hat{\varphi}_N = \frac{\sqrt{3}}{4} \varphi_N$ to estimate the functional $\lambda \left[ \varphi_N \right]$, we start by noticing that the Ricci $\left( Ric := R \right)$ scalar satisfies the following expression:
\begin{align}\label{Ricciscalar1}
\frac{R}{2} = \star \left[ R^{(*)}_{ab} \wedge \hat{\Sigma}^{ab} \right] = 2 \lambda \left| \psi \right|^2_{\psi} \, \Leftrightarrow \, R = 4 \lambda \left| \psi \right|^2_{\psi} \, .
\end{align}
Since functions in a compact manifold are bounded, in particular $\lambda$ and $\left| \psi \right|^2_{\psi}$, we can write:
\begin{align}\label{bounded}
R \in 16 \Lambda \cdot \left[ \min u_{\hat{\varphi}_N} , \max \,  u_{\hat{\varphi}_N} \right] \;\; .
\end{align}
The Ricci scalar $R$ is therefore bounded as well, which will prove to be of importance briefly.

We continue by calculating the topological numbers $n_i$, from Eq.~(\ref{topnumbers}).

\begin{enumerate}[i)]
\item ($n_N$) From Eq.~(\ref{NYdirect}) we have:
\begin{align} 
\label{nY11} n_N & = \mathfrak{Re} \left\{ - 2 i \int\limits_{\mathcal{M}} \Delta \varphi_N d \mu \right\} = 0 \quad .
\end{align}

We can consider this result as a somewhat partial answer to the question of why there is no global manifestation of a physical quantity that hints for the presence of torsion in a space-time manifold. 

\item ($n_P$) From Eq.~(\ref{CP12}) we have:
\begin{align}
\label{nP11} n_P & = \mathfrak{Re} \left\{ - \frac{16 i}{ \sqrt{3} \left(4 \pi \right)^2 } \int\limits_{\mathcal{M}} \left( \alpha_{\psi} \lambda\right)^2 d \mu \right\} = 0 \;\;.
\end{align}

This result can be combined with the Hirzebruch signature theorem to return:
\begin{align}
\nonumber & n_P = 3 \left( b^+ - b^- \right) = 0 \;\; , \\
\label{nPtop} \text{or} \;\;,\;\;\;\;\; & b^+= b^-:= b \;\;\;\;.
\end{align}
where $b^+$, $b^-$ are the dimensions of maximal positive $\mathcal{H}^+$ and negative $\mathcal{H}^-$ subspaces for the form in $H^2 \left( \mathcal{M} ; \mathbb{R} \right) = \mathcal{H}^+ \oplus \mathcal{H}^-$, respectively.  

\item ($n_E$) Finally, from Eq.~(\ref{CE12}) we have
\begin{align}
\nonumber n_E & = - \frac{16}{ 3 \left( 4 \pi^2 \right)} \mathfrak{Re} \left\{ \left\| \left( \alpha_{\psi} \lambda \right)^2 \right\|^2_{L^2} \right\} \\ 
\label{nE11} & = - \frac{16 \Lambda^2}{3 \left( 4 \pi \right)^2} \left\| u^2_{\hat{\varphi}_N} \right\|^2_{L^2} \;\; ,
\end{align}
independent of the norm $\left| \psi \right|^2$.
\end{enumerate}

We remark that this last quantity should be finite by topological arguments, which includes the squared norm of $u^2_{\hat{\varphi}_N}$. Furthermore, $n_E$ can be written in terms of the Betti numbers $b_j := b_j \left( \mathcal{M} \right) \geq 0$ for $j=0, ... , 4$ defined in Eq.~(\ref{betti}), as $n_E := \sum_j \left( -1 \right)^j b_j$. At the same time, by Poincair\'{e} duality  we have that $b_j = b_{4-j}$ for $j = 0, ... , 4$. Additionally, the second Betti number $b_2$ can be expressed as $b_2 := b^+ + b^- = 2 b$ by means of Eq.~(\ref{nPtop}), and the fact that $\mathcal{M}$ is thought to be simply connected, and thus $b_0 = 1$. Altogether, we can write $n_E$ as:
\begin{align}
\label{nEtop} n_E & = 2 + 2 b - 2 b_3 = - 2 k^2_E b_3.
\end{align}
where we have defined:
\begin{align}\label{kE}
k^2_E := 1 - \frac{1 + b}{ b_3}  \lesssim 1 \quad .
\end{align}
The last estimation comes from the claim that, $b_3$ should dominate the previous sum (\ref{nEtop}) for any sensible space-time manifold $\mathcal{M}$, hence its factoring out. This is based on the geometrical interpretation of $b_3$ as the \textit{number of $3$-punctures in $\mathcal{M}$}, which allows us to basically associate this number with the approximate number of GR singularities, i.e black holes in the universe. We notice that these Betti numbers are not in contradiction with a result for non-positive Ricci curvature manifolds \cite{wu1988bochner, 10.2307/1969287}. Among other things this result was obtained for closed manifolds, i.e. without boundary.

From the previous discussion, it follows that the exact expression for $\Lambda$ can be obtained as:
\begin{align}\label{lambdaexact}
\Lambda^2 = \frac{ 3 \left( 4 \pi \right)^2 k^2_E b_3 }{8 \left\| u^2_{\hat{\varphi}_N} \right\|^2_{L^2}} \quad ,
\end{align}
for which we will give some estimates. We begin by citing Eq.~(\ref{Omega1}) again, it immediately follows that we can write the following decomposition $\mathcal{M} = \mathcal{N} \,\, \sqcup \,\, \mathcal{M} \backslash \mathcal{N}$, thus:
\begin{align}
\nonumber 0 \leq \left\| u^2_{\hat{\varphi}_N} \right\|^2_{L^2} & = \text{Vol} \left( \mathcal{N} \right) + \left\| u^2_{\hat{\varphi}_N} \right\|^2_{L^2 \left( \mathcal{M} \backslash \mathcal{N} \right)} \\
\label{uestimate} \simeq \text{Vol} \left( \mathcal{N} \right) & = \frac{\text{Vol} \left( \partial \mathcal{N} \right)}{\mathcal{C}^2} < \infty
\end{align}
where we have used the result (\ref{boundlambda}) and the hypothesis of the strong attractor for the domain $\mathcal{N}$ and consequently a rapid decay for the function $u_{\hat{\varphi}_N}$ in the domain $\mathcal{M} \backslash \mathcal{N}$. Thus the volume $\text{Vol} \left( \mathcal{N} \right)$ dominates over the complete result. The last expression is obtained  by considering $\mathcal{N}$ to be a $4$-dimensional Cheeger minimizer so that $\mathcal{C}^2$ is the Cheeger or isoperimetric constant \cite{Benson}, having dimensions of $\left[ \texttt{length} \right]$. A less stringent result can also be obtained via the isoperimetric inequality, valid in general for manifolds with bounded Ricci curvature \cite{ASENS,hanes1972,Croke1984}, as it is argued to be the case in Eq.~(\ref{bounded}). The Cheeger constant is usually not known for general manifolds, but it is proven to be determined by the dimension of the manifold among other fixed topological parameters.

Eq.~(\ref{uestimate}), combined with Eq.~(\ref{lambdaexact}) gives:
\begin{align}\label{eq:Topolambda1}
 \Lambda \approx  \frac{ 2 \pi \, k_E \, \mathcal{C} }{ \left\langle \frac{2}{3} \text{Vol} \left( \partial \mathcal{N} \right) \right\rangle^{\frac{1}{2}} }  \;\; ,
\end{align}
where we have written:
\begin{align*}
\left\langle  \text{Vol} \left( \partial \mathcal{N} \right) \right\rangle := \frac{1}{b_3}  \text{Vol} \left( \partial \mathcal{N} \right) \;\; ,
\end{align*}
as it was equivalently defined in Eq.~(\ref{eq:AvgBounVol}). Notice that, by Eqs.~(\ref{manifoldregions}) and (\ref{eq:Topolambda1}) we have the final result:
\begin{align}\label{eq:Topolambda}
 \Lambda \approx  \frac{ 2 \pi \, k_E \, \mathcal{C} }{ \left\langle \frac{2}{3} \text{Vol} \left( \partial \mathcal{M} \right) \right\rangle^{\frac{1}{2}} }  \;\; .
\end{align}

\section{\label{sec:borel} On Borel sets and Lyapunov stability of dynamical systems for manifolds}

This section consist of a very brief and incomplete presentation of the core material complementing section \ref{BVproblem} in order to support the interpretation of the scalar matter term as a dynamical stabilizer. More concretely, in the case of this paper we want to interpret the kinetic term of the scalar matter field, i.e. the term proportional to $\left| d \varphi_N \right|^2$ as a Lyapunov function that becomes an attractor for the cosmological functional to drive its value towards a constant.

We begin by defining what we mean by a dynamical system in this context. The latter is governed by differential equations on a manifold $\mathcal{M}$ that have the local representation \cite{Taringoo,Sontag}
\begin{align}
\nonumber \dot{x} (t) & = f (x (t), t), \\
\label{dynsystem} x (t_0) & = x_0 \in \mathcal{M}, \, t \in \left[t_0 , t_f \right].
\end{align}
where $f$ belongs to the space of smooth time varying vector fields $\mathfrak{X} \left( \mathcal{M} \times \mathbb{R} \right)$. The flow associated with $f$ is a map $\Phi_f$ satisfying
\begin{align*}
\nonumber \Phi_f : \left[ t_0 , t_f \right] \times \left[t_0 , t_f \right] \times \mathcal{M} \rightarrow \mathcal{M}  \\
\nonumber \left(s_0 , s_f , x \right) \mapsto \Phi_f \left( s_f , s_0 , x \right) \in \mathcal{M}
\end{align*}
and
\begin{align*}
\frac{d \Phi_f \left(s, s_0 , x \right)}{ds} |_{s=t} = f \left( \Phi_f (t, s_0 , x), t \right).
\end{align*}

\begin{defn}\label{Liapunov}
Suppose that $\mathcal{S}$ is an invariant set of the dynamical system $\left( \mathcal{M} , \Phi_f \right)$, where $\Phi_f : \left[ 0 , + \infty \right) \times \mathcal{M} \rightarrow \mathcal{M} $ is a continue semi-flow. Let $\mathcal{A}$ the domain of attraction defined by the semi-flow $f$. A continuous function $V : \rightarrow \mathbb{R}_{\geq 0}$ is a Lyapunov function if:
\begin{enumerate}[i)]
\item $V \left( x \right) > 0$ for all $x \in \mathcal{A} \backslash \mathcal{S}$,
\item $V \left( x \right) = 0$ for all $x \in \mathcal{S}$,
\item $V$ is proper, that is, $V^{-1} \left( B \right)$ is compact for every compact subset $B$ of
$\mathbb{R}_{\geq 0}$,
\item $V$ satisfies the condition $L_{f} V \left( x \right) = d V \left( f \right) < 0$ for all $x \in \mathcal{A} \backslash \mathcal{S}$ and $f: \mathcal{M} \rightarrow \mathcal{M}$
\end{enumerate}
\end{defn}

Let us suppose the existence of $\bar{x}$, a uniformly Lyapunov stable point. This means that, for any neighborhood $\mathcal{U}_{\bar{x}}$ of $\bar{x} \in  \mathcal{M}$ and any initial parameter $t_0 \in \mathbb{R}$, there exists a neighborhood $\mathcal{W}_{\bar{x}}$ of $\bar{x}$, such that $\forall \, x_0 \in \mathcal{W}_{\bar{x}}, \, \Phi_f \left( t , t_0 , x_0 \right) \in \mathcal{U}_{\bar{x}}, \forall \, t \in \left[ t_0 , \infty \right)$. This is the weakest stability condition we can expect for the dynamical system defined in Eq.~(\ref{dynsystem}).

Now, following \cite{Pugh,Jonsson}, the existence of a region such as $\mathcal{W}_{\bar{x}_{\mu}}$, or any similarly defined region due to a stronger stability condition, implies the existence of the region $\left\{f_{\mu}\right\}^{-1} \left( \mathcal{W}_{\bar{x}_{\mu}} \right) := \mathcal{N} \left( f_{\mu} \right) \subset \mathcal{M}$ consisting of \textit{Osedelec points} or \textit{regular points}. Such regions have well defined \textit{Lyapunov exponents} and represents an \textit{invariant Borel set of total measure} that possesses a collection of \textit{slowly varying Borel zero-forms} $R_{\varepsilon}: \mathcal{N} \left( f_{\mu} \right) \rightarrow \left( 1 , \infty \right), \varepsilon > 0$. Note that the argument can be reversed, due to the diffeomorphic character of the map $f_{\mu}$. Hence, the existence of the collection $R_{\varepsilon}$ of slowly varying Borel zero-forms, implies at least the existence of an associate dynamical system who's flow revolves around a uniform Lyapunov stable point.


\bibliographystyle{apsrev4-1}
\bibliography{sample.bib}
\end{document}